\newtheorem{myrem}{Remark}
\newcommand{\bvec}[1]{\mathbf{#1}}
\renewcommand{\Re}{\mathrm{Re}~}
\renewcommand{\Im}{\mathrm{Im}~}
\newcommand{\Tr}{\mathrm{Tr}}
\newcommand{\mc}[1]{\mathcal{#1}}
\newcommand{\va}{\bvec{a}}
\newcommand{\vb}{\bvec{b}}
\newcommand{\vr}{\bvec{r}}
\newcommand{\vk}{\bvec{k}}
\newcommand{\vK}{\bvec{K}}
\newcommand{\vR}{\bvec{R}}
\newcommand{\ud}{\,\mathrm{d}}
\newcommand{\abs}[1]{\lvert#1\rvert}
\newcommand{\bra}[1]{\langle#1\rvert}
\newcommand{\ket}[1]{\lvert#1\rangle}
\newcommand{\wt}[1]{\widetilde{#1}}
\newcommand{\ie}{\emph{i.e.}}
\newcommand{\wannier}{\texttt{Wannier90} }
\newcommand{\wannierp}{\texttt{Wannier90}}
\newcommand{\qe}{\texttt{Quantum ESPRESSO}}
\newcommand{\RR}{\mathbb{R}}
\newcommand{\CC}{\mathbb{C}}
\newcommand{\I}{\imath}
\newcommand{\R}{\mathbb{R}}
\newcommand{\C}{\mathbb{C}}
\newcommand{\N}{\mathbb{N}}
\newcommand{\lela}{\left \langle}
  \newcommand{\rira}{\right \rangle}
\title{Variational Formulation for \\Wannier functions With
Entangled Band Structure}
\author{Anil Damle \thanks{Department of Computer Science, Cornell
    University, Ithaca, NY 14853. Email: \texttt{damle@cornell.edu}}
  \and Antoine Levitt \thanks{Inria Paris, F-75589 Paris Cedex 12,
    Universit′e Paris-Est, CERMICS (ENPC), F-77455 Marne-la-Vall\'ee.
    Email: \texttt{antoine.levitt@inria.fr}}  \and Lin Lin \thanks{Department of Mathematics,
University of California, Berkeley, Berkeley, CA 94720 and Computational
Research Division, Lawrence Berkeley National Laboratory, Berkeley, CA
94720. Email: \texttt{linlin@math.berkeley.edu}}}
\begin{document}

\maketitle

\begin{abstract}
  Wannier functions provide a localized representation of spectral
  subspaces of periodic Hamiltonians, and play an important role for
  interpreting and accelerating Hartree-Fock and Kohn-Sham density
  functional theory calculations in quantum physics and chemistry. For
  systems with isolated band structure, the existence of exponentially
  localized Wannier functions and numerical algorithms for finding
  them are well studied. In contrast, for systems with entangled band
  structure, Wannier functions must be generalized to span a subspace
  larger than the spectral subspace of interest to achieve favorable spatial
  locality. In this setting, little is known about the theoretical
  properties of these Wannier functions, and few algorithms can find
  them robustly. We develop a variational formulation to compute these
  generalized maximally localized Wannier functions. When
  paired with an initial guess based on the selected columns of the
  density matrix (SCDM) method, our method can robustly find Wannier
  functions for systems with entangled band structure. We formulate
  the problem as a constrained nonlinear optimization problem, and
  show how the widely used disentanglement procedure can be
  interpreted as a splitting method to approximately solve this
  problem. We demonstrate the performance of our method using real materials including
  silicon, copper, and aluminum. To examine more precisely the
  localization properties of Wannier functions, we study the free
  electron gas in one and two dimensions, where we show that the
  maximally-localized Wannier functions only decay algebraically. We
  also explain using a one dimensional example how to modify them to obtain
  super-algebraic decay.

\end{abstract}

\begin{keywords}
Wannier function, Localization, Entangled band, Metallic system,
Variational method, Optimization, Free electron gas
\end{keywords}

\pagestyle{myheadings}
\thispagestyle{plain}

\section{Introduction}

Localized representations of electronic wavefunctions have a wide range
of applications in quantum physics, chemistry, and materials science.
In an effective single particle theory such as Hartree-Fock theory
and Kohn-Sham density functional theory
(KSDFT)~\cite{HohenbergKohn1964,KohnSham1965}, the electronic
wavefunctions are characterized by eigenfunctions of single particle
Hamiltonian operators. These eigenfunctions generally have significant
magnitude in large portions of the computational domain. However, the physically meaningful quantity is
not each individual eigenfunction, but the subspace spanned by the
collection of a set of eigenfunctions. This is often referred to as the Kohn-Sham subspace, and it is often possible to reduce the computational complexity of various methods by using an alternative, localized representation of the subspace.

Wannier functions provide one such localized representation of the Kohn-Sham
subspace. They require
significantly less memory to store, and are the foundation of
so-called ``linear scaling
methods''~\cite{Kohn1996,Goedecker1999,BowlerMiyazaki2012}
for solving quantum problems.  They can also be used to analyze chemical
bonding in complex materials, interpolate the band structure of
crystals, accelerate ground and excited state electronic structure
calculations, and form reduced order models for strongly correlated
many body systems~\cite{MarzariMostofiYatesEtAl2012}.

Wannier functions are not uniquely determined, and depend on a
choice of gauge (a rotation among the occupied states), which strongly
influences their localization. For periodic systems with an isolated
band structure, the localization properties of Wannier functions have
been studied extensively
\cite{Kohn1959,Blount1962,Nenciu1991,BrouderPanatiCalandraEtAl2007,PanatiPisante2013}.
Interestingly, the existence of localized Wannier functions in this
case is characterized by a topological invariant. For physical systems
without magnetic field (invariant under time-reversal symmetry), this
topological invariant is trivial, and it is known that there exists a
gauge leading to Wannier functions with exponential decay
\cite{BrouderPanatiCalandraEtAl2007,panati2007triviality}. In this
setting, efficient numerical algorithms have been developed to compute
these exponentially localized functions
\cite{MarzariVanderbilt1997,KochGoedecker2001,Gygi2009,ELiLu2010,OzolinsLaiCaflischEtAl2013,DamleLinYing2015,MustafaCohCohenEtAl2015,CancesLevittPanatiEtAl2017,DamleLinYing2017a}.
In particular, the widely used maximally-localized Wannier function
(MLWF) procedure minimizes the variance (or
``spread'')~\cite{FosterBoys1960,MarzariVanderbilt1997} over all
possible choices of gauge to obtain localized Wannier functions. In
the insulating case, it is known that minimizers of this spread are
exponentially localized \cite{PanatiPisante2013}.

The situation becomes significantly more challenging for systems with
entangled band structure. Entangled band structure arises in metallic
systems, but also in insulating systems when conduction bands or a
selected range of bands are to be localized. A straightforward
definition of Wannier functions requires the set of all Wannier
functions to exactly span the selected spectral subspace. However,
such Wannier functions are known to decay slowly in real space.
Therefore, the definition of Wannier functions has been generalized to
refer to functions spanning a subspace larger than, but containing the given
entangled spectral subspace, referred to as a ``frozen window''
\cite{SouzaMarzariVanderbilt2001}. This is useful for instance in band
interpolation, where the additional Wannier functions give rise to
extra bands that can simply be ignored. Finding such generalized Wannier
functions numerically is considerably more complex, and few algorithms
in the literature accomplish this task in a robust fashion.
Furthermore, little is known theoretically about the localization
properties of the constructed generalized Wannier functions. In order
to be consistent with the terminology in the physics literature, we
will refer to these generalized Wannier functions simply as Wannier
functions, unless otherwise noted.

In this paper, we develop a variational formulation for finding Wannier
functions in the entangled setting. We formulate the problem as a nonlinear constrained
optimization problem. Practical Wannier function calculations indicate
that such nonlinear optimization problems can have many local minima.
Hence the solution can strongly depend on the initial guess, and the difficulty of constructing a good initial guess is often a significant impediment to
finding Wannier functions in a robust fashion. In order to avoid
being trapped at undesirable local minima, we use the recently
developed selected columns of the density matrix (SCDM) methodology to
construct the initial guess for our variational formulation. This
strategy is applicable to both the isolated
case~\cite{DamleLinYing2015} and the entangled
case~\cite{DamleLin2017}.

Our variational formulation can be obtained in several theoretically
equivalent constructions.  We find that one of these formulations yields
the so-called partly occupied Wannier
functions~\cite{ThygesenHansenJacobsen2005}, and can be solved efficiently using
standard numerical algorithms for minimization under orthogonality constraints.
Our formulation also reveals that the widely used ``disentanglement''
procedure~\cite{SouzaMarzariVanderbilt2001} can be viewed as a splitting
method for solving the constrained optimization problem, which only
performs a single alternation step between the two pieces of the
objective function, and therefore does not achieve a global minimum of
the spread. We verify the performance of the variational
formulation with real materials such as silicon, copper, and aluminum.
In these examples, we find that the fully converged variational
formulation consistently provides orbitals with a smaller spread than
that from the disentanglement procedure, and is more robust to the
choice of initial guess. We also find that the difference between the
variational formulation and the disentanglement procedure is often
small when used for band structure interpolation.
%

The variational formulation allows us to study the decay properties of
Wannier functions for metallic systems. We present the localization
properties of generalized Wannier functions for the free electron gas
in one and two dimensions. We find that minimizers of the spread
exhibit a weak algebraic decay, related to singularities that we
identify in $\vk$ space. This slow decay is shown to be not a
fundamental property of disentangled Wannier functions, but rather a
consequence of the fact that minimizing the spread only imposes finite
second moments (or square-integrable first derivatives in $\vk$
space). In particular we show in one dimension how to modify the
maximally-localized Wannier functions to obtain super-algebraic decay.

The rest of the paper is organized as follows.  We first introduce several background topics such as Bloch-Floquet theory, Wannier functions,
and the SCDM methodology in
section~\ref{sec:prelim}. We then present a variational formulation for
Wannier functions in section~\ref{sec:varwannier}, and discuss the
relation between our variational formulation and existing methods.
Numerical results for real materials and for the free electron gas are
given in sections~\ref{sec:materials} and ~\ref{sec:electrongas}, followed by conclusion and discussion
in section~\ref{sec:conclusion}. Some of the technical details related
to the implementation of the variational formulation are given in the Appendix~\ref{app:gradient}.

\section{Preliminaries}\label{sec:prelim}

\subsection{Bloch-Floquet theory}

We first briefly review
Bloch-Floquet theory for crystal structures. 
The \emph{Bravais lattice} with lattice vectors $\va_{1},\va_{2},\va_{3}\in \RR^{3}$ is defined as
\begin{equation}
  \mathbb{L} = \left\{ \vR = \sum_{i=1}^{3} n_{i} \bvec{a}_{i}, \quad n_1,n_2,n_3\in\mathbb{Z}\right\},
  \label{eqn:Blattice}
\end{equation}
and the lattice vectors define a unit cell in the Bravais lattice
\begin{equation}
 \Gamma =\left\{ \vr = \sum_{i=1}^{3} c_{i} \bvec{a}_{i} \vert -1/2 \le c_{1},c_{2},c_{3}< 1/2\right\}.
 \label{eqn:unit_cell}
\end{equation}
The Bravais lattice induces a reciprocal lattice denoted
$\mathbb{L}^{*}$, which is the support of the Fourier transform of
$\mathbb L$-periodic functions. The lattice vectors of $\mathbb{L}^{*}$
are denoted by
$\vb_{1},\vb_{2},\vb_{3}$, with $\vb_{i}\cdot\va_{j} = 2\pi \delta_{ij}$. A unit cell of the
reciprocal lattice is selected and called (with some abuse of
language) the \emph{Brillouin zone}, and is defined as
\begin{equation}
  \Gamma^{*} =\left\{ \vk = \sum_{i=1}^{3} c_{i} \bvec{b}_{i} \vert -1/2 \le c_{1},c_{2},c_{3}< 1/2\right\}.
 \label{eqn:rec_unit_cell}
\end{equation}

For a potential $V$ that is real-valued and $\mathbb{L}$-periodic, \ie~ 
\begin{equation}
V\left( \vr+\vR\right) = V(\vr), \quad \forall \vr\in \RR^3, \vR \in \mathbb{L},
\label{eqn:Vperiod}
\end{equation} 
we consider the Schr\"{o}dinger operator in $\mathbb{R}^3$
\[
\mc{H} = -\frac{1}{2}\Delta + V.
\]

The Bloch-Floquet theory allows us to relabel the spectrum of $\mc{H}$ using two indices $(n,\vk)$, where $n\in \mathbb{N}$ is the band
index, and $\vk\in\Gamma^{*}$ is the Brillouin zone index. The
generalized (not square-integrable) eigenfunction $\psi_{n,\vk}(\vr)$ is known as a Bloch orbital and 
satisfies
\[
\mc{H}\psi_{n,\vk}(\vr)=\varepsilon_{n,\vk}\psi_{n,\vk}(\vr).
\]
Importantly, $\psi_{n,\vk}$ can be decomposed as 
\begin{equation}
\psi_{n,\vk}(\vr) = e^{\I \vk\cdot \vr} u_{n,\vk}(\vr),
  \label{}
\end{equation}
where $u_{n,\vk}(\vr)$ is a periodic function with respect to
$\mathbb{L}$. Eigenpairs $(\varepsilon_{n,\vk},u_{n,\vk})$ can therefore be 
obtained by solving the eigenvalue problem
\begin{equation}
  \mc{H}(\vk) u_{n,\vk} = \varepsilon_{n,\vk} u_{n,\vk}, \quad n \in \mathbb N, \quad \vk \in
  \Gamma^{*},
  \label{eqn:bandproblem}
\end{equation}
where $\mc{H}(\vk) = \frac12 (-\I \nabla+ \vk)^2 + V(\vr)$.
For each $\vk$, the eigenvalues $\varepsilon_{n,\vk}$ are ordered
non-decreasingly, and $\{\varepsilon_{n,\vk}\}$ as a
function of $\vk$ for a fixed $n$ is called a
\emph{band}. The set of all eigenvalues is called the
\emph{band structure} of the crystal and characterizes the spectrum
of the operator $\mc{H}$. 
If $\varepsilon_{N+1,\vk} > \varepsilon_{N,\vk}$ for all $\vk \in \Gamma^{*}$,
then the first $N$ bands are \textit{isolated}. This is for instance the case
in the occupied bands of an insulator. 
When the gap condition $\varepsilon_{N+1,\vk} > \varepsilon_{N,\vk}$
does not hold, the band
structure becomes entangled. Entangled band structure appears not only
in metallic systems, but also insulating systems when a Wannier
representation of part of the conduction bands is required.


\subsection{Wannier functions}
\label{sec:wannier}

For simplicity, we first consider systems with isolated first $N$ bands\textemdash an assumption we will drop towards the end of this section. 
Rotating the set of functions $\{\psi_{n,\vk}\}$ by an arbitrary
unitary matrix $U(\vk)\in\CC^{N\times N}$, we can
define a new set of functions
\begin{equation}
  \wt{\psi}_{n,\vk}(\vr) = \sum_{m=1}^{N} \psi_{m,\vk}(\vr) U_{m,n}(\vk), \quad
  \vk\in \Gamma^{*}.
  \label{eqn:Utransform}
\end{equation}
A given set of of such matrices $\{U(\vk)\}_{\vk \in \Gamma^{*}}$ is called a \emph{gauge}.

For each $\vk$,
we consider the density matrix $P(\vk)$, which is the projector on the
the eigenspace corresponding to the first $N$ eigenvalues of $H(\vk)$
\begin{equation}
  P(\vk) = \sum_{n=1}^{N}
  \ket{\psi_{n,\vk}}\bra{\psi_{n,\vk}} =\sum_{n=1}^{N}
  \ket{\wt{\psi}_{n,\vk}}\bra{\wt{\psi}_{n,\vk}}.
  \label{eqn:projector}
\end{equation}
Importantly, for each $\vk$, the density matrix $P(\vk)$ is
gauge-invariant. If $\mathcal C$ is a contour in the complex plane enclosing the eigenvalues
$\varepsilon_{1,\vk},\dots, \varepsilon_{N,\vk}$ (and only those), then the Cauchy
integral formula yields an alternative representation of $P(\vk)$
\begin{equation}
  P(\vk) = \frac 1 {2\pi \I}\int_{\mathcal C} \frac 1 {\lambda - H(\vk)}\ud \lambda.
\end{equation}
Since $H(\vk)$ is analytic, it
follows that so is $P(\vk)$.

Given a choice of gauge, the Wannier functions are defined as~\cite{Wannier1937}
\begin{equation}
  w_{n,\vR}(\vr) = \frac{1}{\lvert\Gamma^{*}\rvert}\int_{\Gamma^{*}} \wt{\psi}_{n,\vk}(\vr)
  e^{-\I \vk\cdot \vR}
  \ud \vk, \quad \vr\in\RR^3, \vR\in \mathbb{L},
  \label{eqn:wannier}
\end{equation}
where $\lvert\Gamma^{*}\rvert$ is the volume of the first Brillouin zone. This represents a unitary transformation from the family
$(\psi_{n,\vk})_{n=1,\dots,N,\vk \in \Gamma^{*}}$ to
$(w_{n,\vR})_{n=1,\dots,N,\vR \in \mathbb L}$. In particular, the
Wannier functions $w_{n,\vR}$ are orthogonal to each other and span
the same space as the range of the total density matrix
$\frac 1 {|\Gamma^{*}|}\int_{\Gamma^{*}} P(\vk)\ud \vk$. They are also
translation invariant: $w_{n,\vR}(\vr) = w_{n}(\vr-\vR)$.

For insulating systems, in the absence of topological obstructions,
there exists a gauge such that $\wt{\psi}_{n,\vk}$ is analytic and
$\mathbb{L}^{*}$-periodic in $\vk$, implying that the Fourier
transform of $w_{n,\vR}$ is analytic, and therefore that each
Wannier function decays exponentially as
$\abs{\vr}\to \infty$~\cite{Blount1962,BrouderPanatiCalandraEtAl2007,panati2007triviality}.
The Wannier localization problem is reduced to the problem
of finding a gauge $\{U(\vk)\}$ such that $w_{n,\bvec{0}}$ is localized, or, equivalently, that $\wt{\psi}_{n,\vk}$ is smooth with respect
to $\vk$. This can be done by minimizing the ``spread
functional''~\cite{FosterBoys1960,MarzariVanderbilt1997}
\begin{equation}
  \Omega[\{U(\vk)\}] = \sum_{n=1}^{N} \int \abs{w_{n,\bvec{0}}(\vr)}^{2} \vr^2 \ud \vr - \left| 
   \int \abs{w_{n,\bvec{0}}(\vr)}^{2} \vr \ud \vr \right|^2.
   \label{eqn:spread}
\end{equation}
Here $w_{n,\bvec{0}}$ depends on $U(\vk)$ through $\wt{\psi}_{n,\vk}$ as
in Eq.~\eqref{eqn:wannier}.  This problem is usually
solved by a minimization algorithm such as steepest descent or
conjugate gradient with projections at each step to respect the constraints
that $U(\vk)$ must be unitary \cite{MostofiYatesLeeEtAl2008,mostofi2014updated}.


For systems with entangled band structure, the density matrix $P(\vk)$
as defined in Eq.~\eqref{eqn:projector} is no longer smooth with
respect to $\vk$. As a result, there is no choice of gauge $U(\vk)$
that leads to a set of rotated Bloch orbitals that is smooth with
respect to $\vk$, and Wannier functions defined strictly according to
Eq.~\eqref{eqn:wannier} will then decay very slowly in real
space~\cite{Goedecker1999}. In order to enhance the localization
properties of Wannier functions, the definition of Wannier functions
has been generalized so that the spectral subspace interest is only a
proper subspace of that spanned by Wannier
functions~\cite{SouzaMarzariVanderbilt2001}. In the physics literature,
the spectral subspace is described by a ``frozen window'' along the
energy spectrum, and the Wannier functions are linear combinations of
orbitals from a larger set described by an ``outer window''.


More specifically, we first fix a number of bands $N_{o}$ that
determines the outer
window\footnote{To simplify the exposition we assume a constant number of bands in the outer window, but this can be relaxed to a variable number of bands $N_{o}(\vk)$.} and then proceed to look for Wannier functions
built out of $\psi_{1,\vk},\dots,\psi_{N_{o},\vk}$. Next, for each $\vk$ point we fix a
set of frozen bands $\mathcal{N}_{f}(\vk) \subset \left[N_{o}\right],$ and let $N_f(\vk) = \lvert \mathcal{N}_f(\vk)\rvert.$ $\mathcal{N}_f(\vk)$ are often defined as the bands within a fixed energy window that we
will try to reproduce. Correspondingly, we
define the frozen density matrix as 
\begin{equation}
  P_{f}(\vk) = \sum_{n\in \mathcal{N}_{f}(\vk)}
  \ket{\psi_{n,\vk}}\bra{\psi_{n,\vk}},
  \label{eqn:projectorfrozen}
\end{equation}
which is the projection onto the states within the frozen energy
window. Again, the frozen density
matrix as defined in Eq.~\eqref{eqn:projectorfrozen} is not smooth with
respect to $\vk$.


We now seek to construct a set of $N_{w}$ Wannier functions
that span the subspace defined by the range of
$\frac 1 {|\Gamma^{*}|}\int_{\Gamma^{*}} P_{f}(\vk) \ud \vk$. We introduce
the gauge matrices $U(\vk)\in \CC^{N_{o}\times N_{w}}$ with orthonormal
columns, with $\left\lvert N_{f}(\vk)\right\rvert \leq N_{w} \leq N_{o}$, such that
\begin{equation}
  \wt{\psi}_{n,\vk}(\vr) = \sum_{m=1}^{N_{o}} \psi_{m,\vk}(\vr) U_{m,n}(\vk), \quad
  \vk\in \Gamma^{*},  n=1,\ldots,N_{w}.
  \label{eqn:rotateentangle}
\end{equation}
This may equivalently expressed in matrix form as $\wt{\Psi}(\vk) = \Psi(\vk) U(\vk)$.
This choice of gauge also induces a density matrix of rank $N_{w}$ for each $\vk$ defined as
\begin{equation}
  P_{w}(\vk) = \sum_{n=1}^{N_{w}}
  \ket{\wt{\psi}_{n,\vk}}\bra{\wt{\psi}_{n,\vk}} = \wt{\Psi}(\vk)
  \wt{\Psi}^{*}(\vk) = \Psi(\vk) U(\vk) U^{*}(\vk) \Psi^{*}(\vk).    
  \label{eqn:projectorentangle}
\end{equation}
Note that unlike the case with isolated band structure where $N_{o}=N_{w},$ here $N_w\neq N_o$ implies that $U(\vk)
U^{*}(\vk) \neq I_{N_{o}}.$  Furthermore, since the set of orbitals in the frozen window is only a subset of all
possible orbitals, in general the projectors $P_{w}$ and $P_{f}$ do
not span the same space. In
order to ensure that our Wannier functions span the same subspace as the
subspace associated with the frozen window, we require that
\begin{equation}
  P_{w}(\vk) P_{f}(\vk) = P_{f}(\vk), \quad \forall
  \vk\in\Gamma^{*}.
  \label{eqn:projectorcondition}
\end{equation}

\section{Variational formulation for Wannier functions with entangled
band structure}\label{sec:varwannier}

We now proceed to develop a variational formulation for Wannier
functions. First, we illustrate how to encode the desired constraints
when paired with the aforementioned spread functional. Subsequently, to facilitate numerical solution of the optimization problem, we refine how the constraints are expressed. Lastly, we discuss the relation the existing disentanglement procedure to our formulation and discuss how we construct an initial guess using the SCDM methodology.

\subsection{Formulating the optimization problem}
Without loss of generality, for the following discussion we assume that the
frozen orbitals ($\Psi_{f}$) are always ordered before the rest of the
orbitals ($\Psi_{r}$). In terms of the notation from the previous
section, this means that for each $\vk$ the frozen orbitals are
represented by the set $\mathcal{N}_f(\vk) = \left\{1,2,\ldots,N_f(\vk)\right\}$ with
$N_f(\vk)$ simply representing the number of frozen orbitals per
$\vk$-point. Now, we may partition the orbitals and the gauge using
the following block form
\begin{equation}
  \Psi(\vk)  = \begin{bmatrix}\Psi_{f}(\vk)& \Psi_{r}(\vk)\end{bmatrix},
   \quad U(\vk) =
  \begin{bmatrix} U_{f}(\vk) \\ U_{r}(\vk)\end{bmatrix}.
\end{equation}

The matrices $U_{f}(\vk)$ and $U_{r}(\vk)$ are of size
$N_{f}(\vk) \times N_{w}$ and $(N_{o} - N_{f}(\vk)) \times N_{w}$, 
encoding the weight assigned to the frozen
subspace and its complement in the Wannier functions, respectively. The condition that the Wannier functions represent the
frozen bands as in Eq.~\eqref{eqn:projectorcondition} can conveniently
be expressed in terms of these matrices as follows.
\begin{proposition}\label{prop:projection}
  The following statements are equivalent:
  \begin{enumerate}
    \item $P_{w}(\vk) P_{f}(\vk) = P_{f}(\vk)$.
    \item $U_{f}(\vk) U^{*}_{f}(\vk) = I_{N_{f}(\vk)}$.
    \item $U_{f}(\vk) U^{*}_{r}(\vk) = 0$ and $U_{f}(\vk)$ has full row
      rank.
    \item $U(\vk)=\begin{bmatrix}
    I_{N_{f}(\vk)} & 0 \\
    0 & Y(\vk)
  \end{bmatrix} X(\vk)$,  where $X(\vk)$ is a unitary matrix of size $N_{w} \times N_{w}$,
  and  $Y(\vk)$ is a matrix with orthogonal columns of size
  $(N_{o}-N_{f}(\vk)) \times (N_{w}-N_{f}(\vk))$.
  \end{enumerate}
\end{proposition}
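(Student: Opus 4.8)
The plan is to first translate the geometric condition in statement~(1) into a purely algebraic condition on the blocks of $U(\vk)$, and then dispatch the equivalences among (2), (3), (4) by linear algebra. Throughout I would suppress the $\vk$-dependence and use the two structural facts at hand: the Bloch orbitals are orthonormal, so $\Psi^{*}\Psi = I_{N_{o}}$, and the gauge has orthonormal columns, so $U^{*}U = I_{N_{w}}$, i.e.\ $U_{f}^{*}U_{f} + U_{r}^{*}U_{r} = I_{N_{w}}$. Writing $Q = \left[\begin{smallmatrix} I_{N_{f}} & 0 \\ 0 & 0\end{smallmatrix}\right]$ for the coordinate projector onto the frozen block, we have $P_{f} = \Psi Q \Psi^{*}$ and $P_{w} = \Psi U U^{*}\Psi^{*}$. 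Substituting these into $P_{w}P_{f}=P_{f}$ and cancelling with $\Psi^{*}\Psi = I_{N_{o}}$ (multiply on the left by $\Psi^{*}$ and on the right by $\Psi$) collapses statement~(1) to $U U^{*} Q = Q$. Reading this in $2\times 2$ block form yields the pair $U_{f}U_{f}^{*} = I_{N_{f}}$ and $U_{r}U_{f}^{*} = 0$, which already contains statement~(2); the remaining task is to show the second equation is redundant.

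To obtain (1)~$\Leftrightarrow$~(2) I would argue that $U_{f}U_{f}^{*}=I_{N_{f}}$ alone forces $U_{r}U_{f}^{*}=0$. The cleanest route is geometric: $UU^{*}$ is the orthogonal projector onto $\Ran(U)$, and its $(f,f)$ block being the identity means $\langle e_{i}, UU^{*}e_{i}\rangle = \norm{UU^{*}e_{i}}^{2} = 1 = \norm{e_{i}}^{2}$ for each frozen basis vector $e_{i}$, which forces $UU^{*}e_{i} = e_{i}$; the lower block of this identity is exactly $U_{r}U_{f}^{*}=0$.

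For (2)~$\Leftrightarrow$~(3), note first that $U_{f}U_{r}^{*} = (U_{r}U_{f}^{*})^{*}$, so the orthogonality conditions appearing in the reduction of (1) and in (3) coincide. The substantive content is that, given $U_{f}U_{r}^{*}=0$, multiplying the column-orthonormality identity $U_{f}^{*}U_{f}+U_{r}^{*}U_{r}=I_{N_{w}}$ on the left by $U_{f}$ and on the right by $U_{f}^{*}$ yields $(U_{f}U_{f}^{*})^{2} = U_{f}U_{f}^{*}$. Thus $M := U_{f}U_{f}^{*}$ is a Hermitian idempotent, i.e.\ an orthogonal projector, and it equals $I_{N_{f}}$ precisely when it has full rank $N_{f}$, which is exactly the full-row-rank hypothesis on $U_{f}$. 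This gives both directions at once.

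The main work — and where I expect the only real obstacle — is the equivalence with the canonical form~(4). The direction (4)~$\Rightarrow$~(2) is a one-line check: writing $X = \left[\begin{smallmatrix} X_{1} \\ X_{2}\end{smallmatrix}\right]$ with $X_{1}$ its first $N_{f}$ rows gives $U_{f} = X_{1}$, and since $X$ is unitary its leading rows are orthonormal, so $U_{f}U_{f}^{*} = X_{1}X_{1}^{*} = I_{N_{f}}$. The converse is the genuinely constructive step. Starting from $U_{f}U_{f}^{*}=I_{N_{f}}$, the $N_{f}$ rows of $U_{f}$ are orthonormal in $\CC^{N_{w}}$ and can be completed to a unitary $X = \left[\begin{smallmatrix} U_{f} \\ X_{2}\end{smallmatrix}\right]$. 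I would then set $Y := U_{r}X_{2}^{*}$ and verify two things: that $U = \left[\begin{smallmatrix} I_{N_{f}} & 0 \\ 0 & Y\end{smallmatrix}\right]X$, which reduces to $YX_{2} = U_{r}$ and holds because $U_{r}U_{f}^{*}=0$ places the rows of $U_{r}$ in the row space of $X_{2}$; and that $Y$ has orthonormal columns. The latter is the crux, requiring the two unitarity relations of $X$ to be combined carefully: from $X^{*}X = I_{N_{w}}$ one gets $U_{r}^{*}U_{r} = I_{N_{w}} - U_{f}^{*}U_{f} = X_{2}^{*}X_{2}$, whence $Y^{*}Y = X_{2}U_{r}^{*}U_{r}X_{2}^{*} = X_{2}X_{2}^{*}X_{2}X_{2}^{*} = I_{N_{w}-N_{f}}$, now using $X_{2}X_{2}^{*}=I_{N_{w}-N_{f}}$ from $XX^{*}=I_{N_{w}}$. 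Tracking the block dimensions and invoking $XX^{*}=I$ and $X^{*}X=I$ in the right places is the delicate part of the calculation.
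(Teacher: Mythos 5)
Your proof is correct, and for the first three equivalences it follows essentially the paper's route: reduce statement (1) to the pair of block identities $U_{f}U_{f}^{*}=I_{N_{f}}$ and $U_{r}U_{f}^{*}=0$ using orthonormality of the Bloch orbitals, then pass between (2) and (3) via the fact that $UU^{*}$ is an orthogonal projector (your Pythagoras argument that $U_{f}U_{f}^{*}=I_{N_{f}}$ alone forces $U_{r}U_{f}^{*}=0$ spells out what the paper compresses into ``since $UU^{*}$ is a projector, it follows that $U_{r}U_{f}^{*}=0$''). Where you genuinely diverge is the constructive direction of (4). The paper argues from (3): since $UU^{*}=\begin{bmatrix} I_{N_{f}} & 0 \\ 0 & U_{r}U_{r}^{*}\end{bmatrix}$, the block $U_{r}U_{r}^{*}$ is an orthogonal projector of rank $N_{w}-N_{f}$, so it factors as $YY^{*}$ with $Y$ having orthonormal columns, and then $X=\begin{bmatrix} I_{N_{f}} & 0\\ 0 & Y^{*}\end{bmatrix}U$ is checked to be unitary. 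You instead complete the orthonormal rows of $U_{f}$ to a unitary $X=\begin{bmatrix} U_{f}\\ X_{2}\end{bmatrix}$ and set $Y=U_{r}X_{2}^{*}$, verifying $YX_{2}=U_{r}$ and $Y^{*}Y=I_{N_{w}-N_{f}}$ by block algebra from $X^{*}X=XX^{*}=I$ and $U^{*}U=I$; this is a dual construction ($X$ first, then $Y$, rather than $Y$ first, then $X$). Both are valid. The paper's version has a practical payoff you should be aware of: factoring $U_{r}U_{r}^{*}=YY^{*}$ via an eigendecomposition and setting $X=\begin{bmatrix} I_{N_{f}} & 0\\ 0 & Y^{*}\end{bmatrix}U$ is exactly the projection scheme reused in the Implementation section to convert a matrix $U$ that satisfies the constraints only \emph{approximately} into an admissible pair $(X,Y)$, whereas your completion-based construction presupposes exact feasibility (and an abstract basis-completion step) and so does not double as that numerical procedure. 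In exchange, your route avoids any eigendecomposition and makes the verification that $Y$ has orthonormal columns a short, purely algebraic computation.
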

\begin{proof}
  Since each $\vk$ point is treated independently, for simplicity we
  drop the $\vk$ dependence in the proof below.

   \noindent$\mathbf{1 \Leftrightarrow 2}:$ From the definition of
   $P_{w},P_{f}$ we have
   \begin{align*}
     P_{w}P_{f} &= \Psi      \begin{bmatrix}
       U_{f} U_{f}^{*} & U_{f} U_{r}^{*}\\
       U_{r}U_{f}^{*} & U_{r}U_{r}^{*}
     \end{bmatrix}
     \begin{bmatrix}
       I&0\\0&0
     \end{bmatrix}
     \Psi^{*} = \Psi      \begin{bmatrix}
       U_{f} U_{f}^{*} & 0\\
       U_{r}U_{f}^{*} & 0
     \end{bmatrix}
     \Psi^{*}\\
     P_{f} &= \Psi
     \begin{bmatrix}
       I_{N_{f}}&0\\0&0
     \end{bmatrix}
     \Psi^{*}
   \end{align*}
   The result follows since $\Psi$ has orthogonal columns.

   \noindent$\mathbf{2 \Leftrightarrow 3}:$
     From the partition of $U$ we have
     \begin{align*}
       U U^{*} =
       \begin{bmatrix}
         U_{f} U_{f}^{*} & U_{f} U_{r}^{*}\\
         U_{r}U_{f}^{*} & U_{r}U_{r}^{*}
       \end{bmatrix} 
     \end{align*}
   If $U_{f} U_{f}^{*} = I_{N_{f}}$, since $UU^{*}$ is a projector, it
   follows that $U_{r}U_{f}^{*} = 0$. On the other hand, if
   $U_{r}U_{f}^{*} = U_{f}U_{r}^{*} = 0$, then the fact that $UU^{*}$ is a
   projector implies that $U_{f}U_{f}^{*}$ is a projector as well. Since
   it has full row rank, it must therefore be the identity matrix.

   \noindent$\mathbf{3 \Leftrightarrow 4}:$ 
   If 3 is true, then
   \[
     P_{U} = U U^{*} = 
     \begin{bmatrix}
       I_{N_{f}} & 0 \\
       0 & U_{r}U_{r}^{*}
     \end{bmatrix}.
   \]
   Then $U_{r}U_{r}^{*}$ is a projector with rank $N_{w}-N_{f}$, or equivalently
   \[
     U_{r}U_{r}^{*} = Y Y^{*}
   \]
   for some $(N_{o}-N_{f}) \times (N_{w}-N_{f})$ matrix $Y$ with
   orthogonal columns. Then
   \[
     U = P_{U} U = 
     \begin{bmatrix}
       I_{N_{f}} & 0 \\
       0 & Y Y^{*} 
     \end{bmatrix} U = 
     \begin{bmatrix}
       I_{N_{f}} & 0 \\
       0 & Y  
     \end{bmatrix} X, 
   \]
   where 
   \[
     X=\begin{bmatrix} I_{N_{f}} & 0 \\ 0 & Y^{*} \end{bmatrix} U,
   \]
   and it can be readily verified that $X$ is unitary. The reverse
   direction is obvious.
\end{proof}

Proposition~\ref{prop:projection} gives us
various concise ways to impose the desired condition on the span of
Wannier functions, and we may for instance consider using condition $2$. Since the smoothness requirement for $\wt{\psi}_{n,\vk}$ with respect to the
Brillouin zone index $\vk$ can be realized by minimizing the spread
functional~\eqref{eqn:spread}, finding the desired smooth gauge $U(\vk)$
can be recast as the following constrained optimization problem:

\begin{equation}
  \begin{split}
  \inf_{\{U(\vk)\}} &\quad \Omega[\{U(\vk)\}]\\
  \text{s.t.} & \quad U^{*}(\vk) U(\vk) = I_{N_{w}}, \quad
  U_{f}(\vk) U_{f}^{*}(\vk) = I_{N_{f}(\vk)}.
  \end{split}
  \label{eqn:varopt}
\end{equation}

The difficulty at this stage is that numerical optimization of \eqref{eqn:varopt} with
respect to these constraints may not be easy. In particular, the set of
matrices $U$ satisfying $U^{*}U = I_{N_{w}}$ and $U_{f}U_{f}^{*} = I_{N_{f}}$ does not 
necessarily possess a smooth manifold structure. This complicates the application
of standard methods for the minimization of functions over
orthogonality constraints. 

On the other hand, the condition $4$ in
Proposition~\ref{prop:projection} represents $U$ in a factorized form,
hereinafter referred to as the $(X,Y)$ representation.
This representation of the matrix $U(\vk)$ gives rise to Wannier
functions composed of the $N_{f}(\vk)$ functions in the
frozen window, and another set of $N_{w} - N_{f}(\vk)$
functions, encoded by the matrix $Y(\vk)$. This $Y$ encapsulates all
the necessary information about the projector $P_{w} = \wt \Psi \wt \Psi^{*}$. The unitary
$X(\vk)$ matrix mixes these $N_{w}$ Wannier functions amongst themselves to
produce a smooth gauge. In the $(X,Y)$ representation,
the variational formulation~\eqref{eqn:varopt} can be written
as
\begin{equation}
  \begin{split}
  \inf_{\{X(\vk),Y(\vk)\}} &\quad \Omega[\{U(\vk)\}],\\
  \text{s.t.} & \quad U(\vk) = \begin{bmatrix} I_{N_{f}(\vk)} & 0 \\
    0 & Y(\vk)\end{bmatrix} X(\vk), \\
  & \quad X^{*}(\vk) X(\vk) = I_{N_{w}},\\ &\quad Y^{*}(\vk) Y(\vk) =
  I_{N_{w}-N_{f}(\vk)}.
  \end{split}
  \label{eqn:varpartial}
\end{equation}
This optimization problem is equivalent to the ``partly occupied Wannier
functions''~\cite{ThygesenHansenJacobsen2005}. This also directly
generalizes the maximally localized Wannier functions
procedure~\cite{MarzariVanderbilt1997} by Marzari and Vanderbilt for the
isolated case.

The $(X,Y)$ representation is a redundant representation, and a given $U$
can be reproduced by many pairs $(X,Y)$. However,
in contrast to the formulation~\eqref{eqn:varopt}, the constraint in
Eq.~\eqref{eqn:varpartial} defines a Riemannian manifold where
$X(\vk)$ and $Y(\vk)$ are independent matrices with orthogonality
constraints. This allows us to use standard algorithms for the
minimization of differentiable functions on Riemannian manifolds to
solve the problem. We refer to Appendix~\ref{app:gradient} for the details
of the computation of the gradient of the objective function $\Omega$.

\subsection{Implementation}

For our implementation, we modified the Julia \cite{bezanson2017julia}
library \texttt{Optim.jl} for unconstrained optimization to
accommodate constraints represented by Riemannian manifolds
\cite{edelman1998geometry,absil2009optimization}. Our modifications
have been integrated into that library and are available online
\footnote{\url{https://github.com/JuliaNLSolvers/Optim.jl}}. For the
numerical tests that follow we used the limited-memory BFGS algorithm
\cite{nocedal2006numerical} with Hager-Zhang line search
\cite{hager2005new}, which gave the best performance compared to other
readily-available algorithms (steepest descent, conjugate gradient,
BFGS) and line searches (fixed step, backtracking).

In order to generate the initial guess for numerical optimization, we
need to convert a given matrix $U$ to a pair $(X,Y)$ that
parametrizes it. It will also be useful to consider matrices $U$ that
do not satisfy the constraints $U^{*}U=I_{N_{w}}$ and $U_{f}U_{f}^{*}=I_{N_{f}}$ exactly but only approximately. 
This will allow us to project $U$ to the admissible set that satisfy
these constraints.

To find a pair $(X,Y)$ that represents a given $U$, we
first choose $Y$ to minimize the error on the projector $UU^{*}$
measured by the Frobenius norm via
\begin{equation}
  \inf_{Y^{*}Y=I} \left\Vert UU^{*}-\begin{bmatrix}
    I_{N_{f}} & 0 \\
    0 & Y Y^{*} 
  \end{bmatrix}\right\Vert_{F}^2.
  \label{}
\end{equation}
A solution to this problem can be computed using the eigenvalue decomposition
\begin{equation}
  U_{r}U_{r}^{*} = V S V^{*}.
  \label{}
\end{equation}
When $U$ satisfies the constraints, $U_{r}U_{r}^{*}$ is a
projector of rank $N_{w}-N_{f}$, and we can choose $Y$ to be the
columns of $V$ corresponding to the $N_{w}-N_{f}$ non-zero eigenvalues
of $S$. When $U$ does not satisfy the constraints, we pick $Y$ as the
columns of $V$ corresponding to the largest $N_{w}-N_{f}$ eigenvalues
of $S$.

Once $Y$ is computed, we can find $X$ that minimizes the error on $U$:
\begin{equation}
  \inf_{X^{*}X=I} \left\Vert U-\begin{bmatrix}
    I_{N_{f}} & 0 \\
    0 & Y 
  \end{bmatrix} X\right\Vert_{F}^2.
  \label{}
\end{equation}
When $U$ satisfies the constraints, the solution of this problem
is simply
\[
X
=  \begin{bmatrix}
    I_{N_{f}} & 0 \\
    0 & Y^{*}
  \end{bmatrix} U.
  \] 
Otherwise, the solution can again be obtained via the singular value decomposition
\begin{equation}
 \begin{bmatrix}
    I_{N_{f}} & 0 \\
    0 & Y^{*}
  \end{bmatrix} U = \wt{V_{\rm l}} \wt{S} \wt{V_{\rm r}}^{*},
  \label{}
\end{equation}
and setting $X = \wt{V_{\rm l}} \wt{V_{\rm r}}^*$. This step is also
called the L{\"o}wdin orthogonalization procedure~\cite{Loewdin1950}.

\subsection{Relation to disentanglement}
Our variational formulation also gives us a concise way to understand the
``disentanglement'' procedure of Souza, Marzari and
Vanderbilt~\cite{SouzaMarzariVanderbilt2001}, in which the spread
functional is split into two parts
\begin{equation}
  \Omega[\{U(\vk)\}] = \Omega_{I}[\{U(\vk)\}] +
  \widetilde{\Omega}[\{U(\vk)\}].
  \label{}
\end{equation}
Here $\Omega_{I}$ is called the gauge-invariant part (depending on $P_{w}$,
and hence only on $Y Y^{*}$), and $\widetilde{\Omega}$ is called the gauge
dependent part (depending on $X$).
Instead of optimizing~\eqref{eqn:varpartial}
directly,~\cite{SouzaMarzariVanderbilt2001} proposes to use a two step
procedure. First one optimizes the gauge-invariant part only:
\begin{equation}
  \inf_{\{Y(\vk)\}} \quad \Omega_{I}[\{U(\vk)\}].
  \label{}
\end{equation}
This is numerically expedient as $\Omega_{I}$ only depends on
$Y Y^{*}$. In fact, it is analogous to minimization problems in electronic
structure (for instance the Hartree-Fock model), where one minimizes
the energy, which only depends on the spectral subspace, over all
possible orthogonal orbitals. The authors in
\cite{SouzaMarzariVanderbilt2001} accordingly obtain a nonlinear eigenvalue
problem as the first-order optimality conditions, which they solve
using a damped self-consistent field (SCF) iteration.

Once $\{Y(\vk)\}$ is obtained, it is fixed and so is the projector
$P_{w}$. A second minimization problem
\begin{equation}
  \inf_{\{X(\vk)\}} \quad \widetilde{\Omega}[\{U(\vk)\}] 
  \label{}
\end{equation}
is then solved with respect to the gauge matrix $X(\vk)$. This
optimization problem is of
the same nature as the one for an isolated set of bands.

The total spread from the above two-step
procedure is necessarily larger or equal to the global minimum
of~\eqref{eqn:varpartial}. Interestingly, although the optimal spread can be
substantially lower than that found by the two-step disentanglement
procedure, numerical experiments show that the quality of Wannier
interpolation, measured for instance by the qualify of band structure
interpolation, is often similar in both cases.

\subsection{Selected column of the density matrix}
\label{sec:scdm}
While the primary purpose of this work is to introduce and analyze a
variational formulation of Wannier functions, both the objective
function and the constraints are nonlinear, and hence there may exist
multiple local minima.  It is practically important to seed
such methods with a good initial guess. Here, we summarize the
recently developed unified methodology for Wannier localization of
entangled band structure \cite{DamleLin2017} based on the selected
columns of the density matrix (SCDM) methodology \cite{DamleLinYing2015}. Importantly, this method is direct and robust\textemdash no initial guess is required and it will generate valid output\textemdash and thus may be reliably used to generate an initial guess.

The SCDM method for entangled band structure first constructs a
quasi-density matrix
\begin{equation}
\label{eqn:quasidensity}
f(H(\vk)) = \sum_{n=1}^{N_{o}}f\left(\varepsilon_{n,\vk}\right)
  \ket{\psi_{n,\vk}}\bra{\psi_{n,\vk}},
\end{equation}  
For insulating systems $f$ would be $1$ on the occupied bands and
$0$
otherwise, yielding the projector $P$ as before. For entangled band
structure however, the function $f(\cdot)$ is chosen to be large on the bands of
interest and decays rapidly, but smoothly, away from them
\cite{DamleLin2017}. The SCDM algorithm constructs a gauge
by selecting a common set of columns of the $\vk$-dependent
(quasi-)density matrix $f(H(\vk))$. In practice, it is often sufficient to
select these columns based on an ``anchor'' point denoted $\vk_0$\textemdash
generically chosen to be the so-called Gamma-point $(0,0,0)^T.$

We now briefly outline the SCDM method and refer the reader to
\cite{DamleLin2017} for more details. Let
$\Psi_{\vk}\in\mathbb{C}^{N_g\times N_o}$ be the matrix with
orthogonal columns that represents $\left\{\psi_{n,\vk}(\vr)\right\}$ on
a discrete grid in the unit cell, and $\mc{E}(\vk) =
\text{diag}\left[\left\{\varepsilon_{n,\vk}\right\}_{n=1}^{N_o}\right]$
be a diagonal matrix encoding the corresponding eigenvalues.
SCDM identifies $N_w$ columns of $f(H(\vk))$ based on the leading $N_w$
columns of the permutation matrix $\Pi$, computed via the
QR factorization with column pivoting (QRCP) procedure
\begin{equation}
\Psi_{\vk_0}^*\Pi = QR.
\end{equation}
This set of columns is denoted by $\mc{C} =
\left\{\vr_{n}\right\}_{n=1}^{N_w}\subset \Gamma$. Now, for each $\vk$-point define $\Xi(\vk)\in\mathbb{C}^{N_o\times N_w}$ as
\begin{equation}
\Xi_{n,n'}(\vk) = f(\varepsilon_{n,\vk})\psi_{n,\vk}^*(\vr_{n'}).
\end{equation} 
It is expected that 
\begin{equation}
\wt{\psi}_{n,\vk}(\vr) = \sum_{m}\psi_{m,\vk}(\vr)\Xi_{m,n}(\vk)
\end{equation}
is smooth with respect to $\vk.$ Therefore, if the singular values of
$\Xi(\vk)$ are uniformly bounded away from 0 in the Brillouin zone,
$U(\vk)$ constructed via L{\"o}wdin orthogonalization~\cite{Loewdin1950} of $\Xi(\vk)$ has
orthogonal columns, and yields $\left\{\tilde{\psi}_{n,\vk}\right\}$
that are smooth with respect to $\vk$. 

In this framework the frozen bands are
not represented exactly, so prior to use in our optimization procedure we must convert from $U$ to a pair $(X,Y)$ via our aforementioned scheme. Since the projector on the
frozen set varies discontinuously with $\vk$, this procedure does not
produce a continuous gauge. However, if the function $f$ is chosen appropriately, it should be close to one. This is further corroborated by the quality of band interpolation, despite the substantially larger spread, we achieve in the numerical results section using the SCDM initial guess without explicitly freezing any bands.


\section{Real materials}\label{sec:materials}
We first consider the performance of our variational formulation for
several real materials. This includes valence and conduction bands of
silicon (semiconductor), conduction bands of copper (metal), and valence
bands of aluminum (metal). We always start with the aforementioned SCDM
based initial guess.  We compare the result obtained from the
variational formulation to that obtained from the disentanglement
procedure in \wannierp, as well as the result obtained directly from the
SCDM initial guess without further refinement. 

In these experiments, the choice of the
parameters of the SCDM procedure are chosen to yield good baseline band
structure interpolation. However, they are not ``optimized'' to minimize band structure interpolation error. These experiments often show how the two optimization methods are
comparable, though in some situations we are able to find Wannier functions with
smaller spreads using our variational method even given the same initial guess. One interesting
point that we will see play out throughout our examples is that the
value of the spread and band interpolation quality may not be directly
connected, \emph{i.e.}\ 
Wannier functions with considerably different values of spread can yield
qualitatively similar interpolation error.

All of the $\vk$-point calculations and reference band structure
calculations were performed with
\qe~\cite{GiannozziBaroniBoniniEtAl2009}. The SCDM initial guess was
constructed using the code available
online\footnote{\url{https://github.com/asdamle/SCDM}}. Our new variational
formulation was implemented in the Julia language and is available
online\footnote{\url{https://github.com/antoine-levitt/wannier}}.

\subsection{Silicon}
Here, we compute the lowest 16 bands of Silicon on an $8\times 8\times 8$ $\vk$-point grid and then proceed to
compute eight Wannier functions. This includes the four valence bands
and four additional low lying conduction bands. For the SCDM procedure
we use $\mu = 11.0$ and $\sigma = 2.0$ with $f$ corresponding to
``entangled case 1'' in \cite{DamleLin2017}\textemdash a complementary error function. For \wannier and our
method,
we freeze bands below 12 eV and set the outer window
maximum at infinity. For \wannier the prescribed convergence criteria
of $1\times 10^{-10}$ on the spread was reached after 225
disentanglement iterations and 95 spread reduction iterations, and for
our method after 149 iterations.

Figure~\ref{fig:Si8_band_interp} shows the band structure interpolation
using the three methods. We see that for the four
valence bands all three methods perform very well. Furthermore, while
there are differences in the interpolation of the conduction bands, no
one method clearly outperforms the others. As expected, if we consider
the total spread (see Table~\ref{tab:Si_spread}) of the final localized
orbitals, our variational formulation yields the most compact orbitals.

In Table~\ref{tab:Si_spread_all} we report the per orbital spread for
each method, and observe that to the number of significant digits
reported all the orbitals found by our method have the same spread (they
do not vary until the fourth decimal place). In contrast, \wannier seems
to converge to two distinct sets of orbitals with slightly different
spreads. In Figure~\ref{fig:Si_orbitals}, we illustrate the differences
between the orbitals found with our variational method and those found
via \wannierp.  We observe that the orbitals obtained from our
variational formulation resemble more closely to sp$^3$ hybridized
orbitals centering around each Si atom, as indicated from 
chemical intuition.  Interestingly, by using the output of our variational
method as input to \wannier we are able to force \wannier to converge to
the same point as our method. Unfortunately, it is difficult to pinpoint
a specific cause for the apparent convergence of \wannier to a worse
local minimum in this setting.

\begin{figure}[h!]
  \centering
  \includegraphics[width=.8\textwidth]{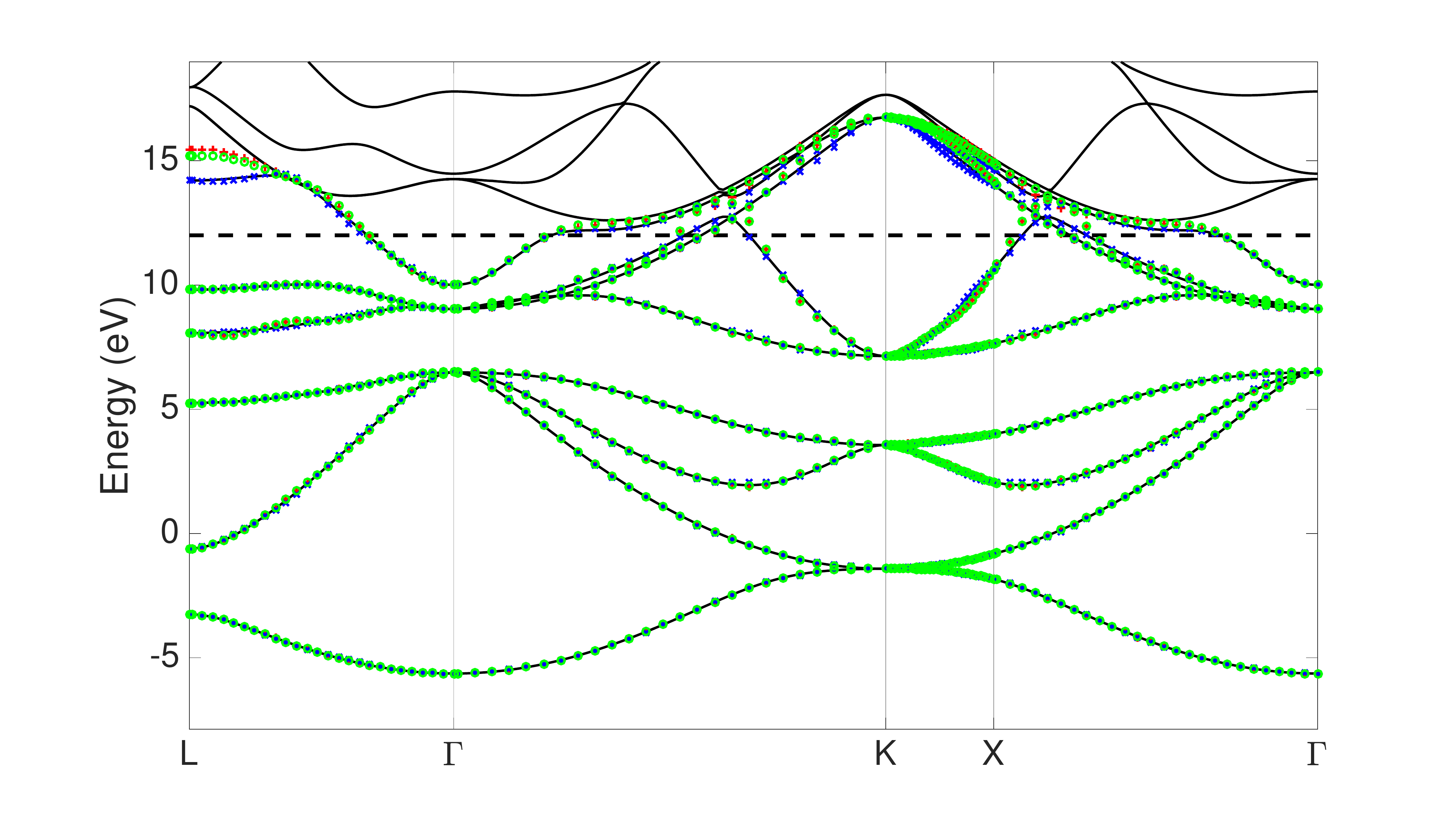}
  \caption{Wannier interpolation of Silicon with 8 k-points using (blue Xs) SCDM, (green circles) our variational formulation, and (red +s) \wannier compared with a (black line) reference calculation. The frozen window is the region below the dotted black line.}
  \label{fig:Si8_band_interp}
\end{figure}

\begin{table}
  \caption{Silicon spread and valence band error comparison}
\label{tab:Si_spread}
\rowcolors{2}{gray!25}{white}
\centering
\begin{tabular}{lccc} \rowcolor{gray!50}
   & Final spread $\left(\text{\AA}^2\right)$ & max error (eV) & RMSE (eV) \\
 Variational & 25.177 & 0.069 & 0.021 \\ 
 \wannier & 27.00 & 0.083 & 0.023 \\ 
 SCDM & 45.206 & 0.112 & 0.029 \\ 
\end{tabular}
\end{table}

\begin{table}
\caption{Spreads of the eight individual Wannier functions for silicon}
\label{tab:Si_spread_all}
\rowcolors{2}{gray!25}{white}
\centering
\begin{tabular}{lcccccccc} \rowcolor{gray!50}
 & \multicolumn{8}{c}{Orbital spread $\left(\text{\AA}^2\right)$} \\
 Variational & 3.15 & 3.15 & 3.15 & 3.15 & 3.15 & 3.15 & 3.15 & 3.15 \\ 
 \wannier & 3.16 & 3.16 & 3.16 & 3.16 & 3.59 & 3.59 & 3.59 & 3.59 \\ 
 SCDM & 4.93 & 4.93 & 4.93 & 4.93 & 6.37 & 6.37 & 6.37 & 6.37 \\ 
\end{tabular}
\end{table}

\begin{figure}[h!]
  \centering
  \includegraphics[width=0.45\textwidth]{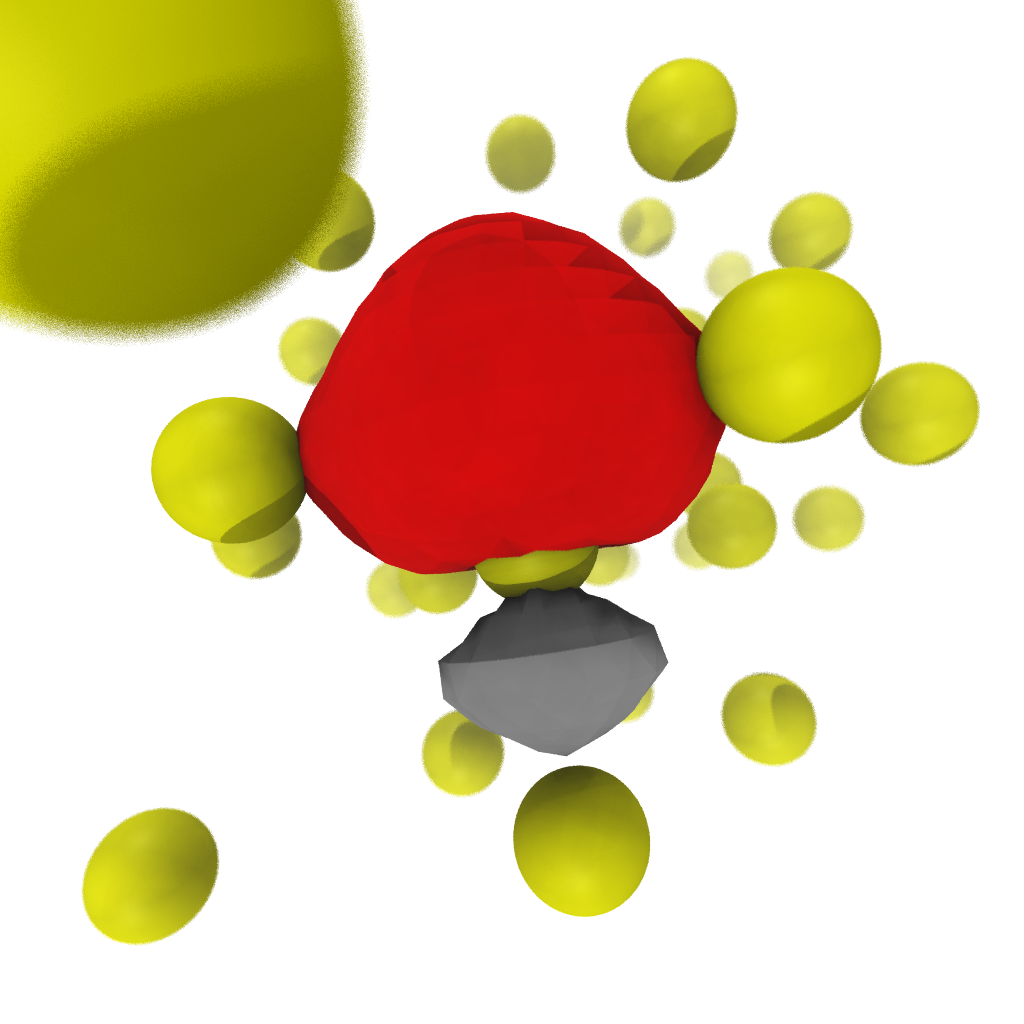}
  \includegraphics[width=0.45\textwidth]{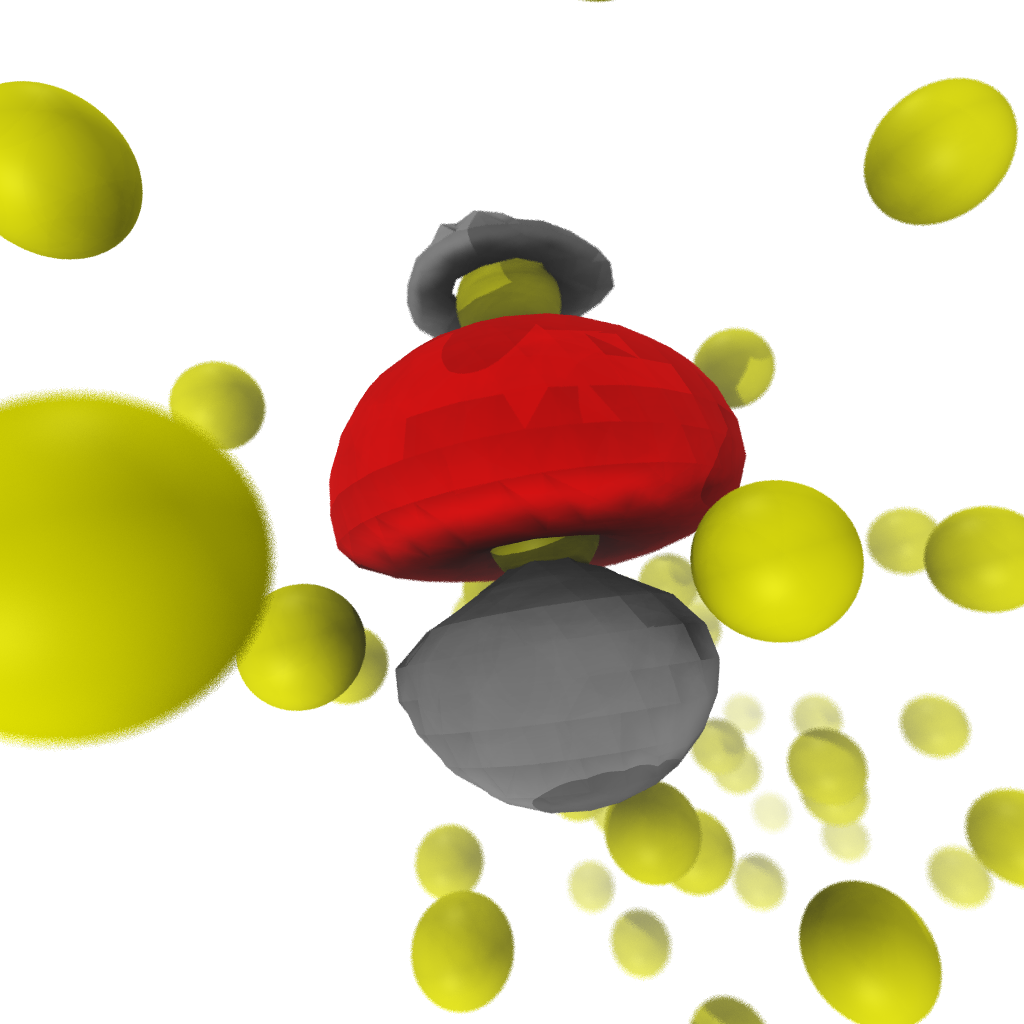}
  \caption{Example orbitals from our variational method (left) and \wannier (right). Here the red and gray isosurfaces are plotted at values $\pm 0.5$ for both normalized orbitals. All of the orbitals we find with our variational method seem to clearly have sp$^3$ hybrid character, as seen in the left figure. In contrast, as illustrated on the right, some of the orbitals found by \wannier with larger spread do not share this behavior as clearly.}
  \label{fig:Si_orbitals}
\end{figure}



\subsection{Aluminum}
We now repeat the same experiments as before, but for the valence bands
of the aluminum system with an $8\times 8\times 8$ $\vk$-point grid. Specifically, we start with six bands and seek
four Wannier functions. For the SCDM procedure we use $\mu = 8.42$ and
$\sigma = 4.0$ with $f$ corresponding to ``entangled case 1'' in
\cite{DamleLin2017}\textemdash a complementary error function. Once again, \wannier was run until convergence at $10^{-10}$ for both the disentanglement and spread minimization, and reached that threshold after 2,437 and 91 iterations. Our method converged to a spread reduction tolerance of $10^{-10}$ after 138 iterations. As we observe in Figure~\ref{fig:Al_band_interp}, all three methods once again perform well\textemdash particularly below
the Fermi energy. The final spreads for each of the three methods are
reported in Table~\ref{tab:Al_spread} along with the spreads of each orbital. While both \wannier and our
method improve upon the spread of the SCDM initial guess, we do find
slightly smaller spread with our optimization procedure. Here, bands
below 11.6 eV were frozen in both \wannier and our method, and the outer
window was set to $\infty.$

\begin{figure}[h!]
  \centering
  \includegraphics[width=.8\textwidth]{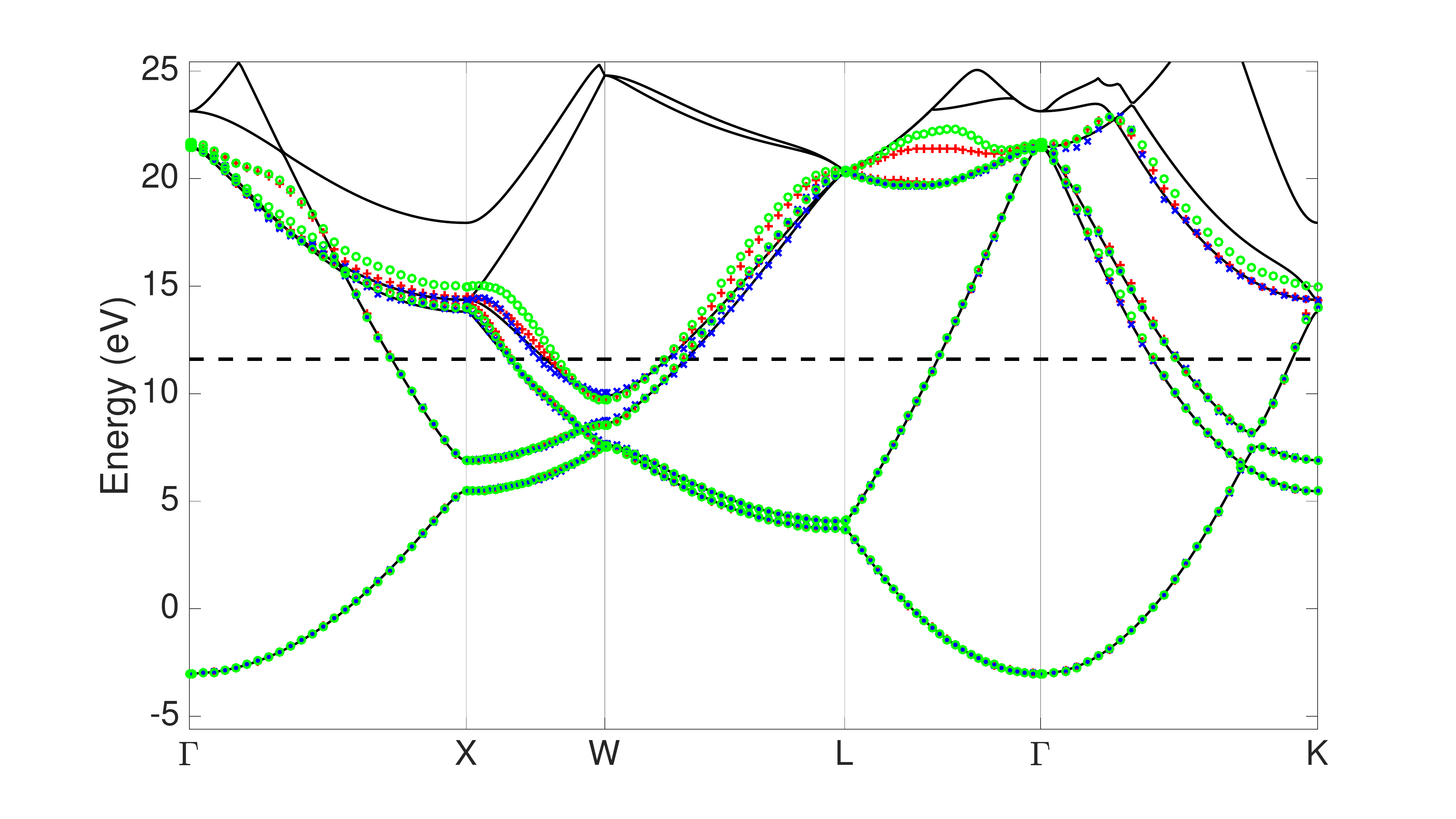}
  \caption{Wannier interpolation of Aluminum with 8 $\vk$-points per direction using (blue Xs) SCDM, (green circles) our variational formulation, and (red +s) \wannier compared with a (black line) reference calculation. The frozen window is the region below the dotted black line.}
  \label{fig:Al_band_interp}
\end{figure}


\begin{table}
\rowcolors{2}{gray!25}{white}
\centering
\begin{tabular}{lccccc} \rowcolor{gray!50}
 & \multicolumn{4}{c}{Orbital spread $\left(\text{\AA}^2\right)$} & Final spread $\left(\text{\AA}^2\right)$\\
 Variational & 2.01 & 2.02 & 2.02 & 2.03 & 8.07\\ 
 \wannier & 2.09 & 2.1 & 2.11 & 2.11 & 8.41\\ 
 SCDM & 3.44 & 4.02 & 4.21 & 4.21 & 15.89\\ 
\end{tabular}
\caption{Spreads of the four individual Wannier functions for Aluminum and the final spread}
\label{tab:Al_spread}
\end{table}

We also consider the convergence of the band interpolation error in this setting, looking at both maximum error and RMSE. As before, for both our method and \wannier we freeze bands below 11.6 eV and set the outer window to infinity. For all three methods we then measure the error of band interpolation at or below the Fermi energy (8.42 eV). In all cases, we capped \wannier at 5,000 disentanglement and spread reduction iterations and considered it converged at a tolerance of $10^{-10}.$ Similarly, we considered our method converged the objective function changed by less than $10^{-10}$ between successive iterates, or we reached 5,000 iterations. Typically our method took 100-250 iterations to converge, whereas the \wannier disentanglement would hit the iteration cap and spread reduction took 90-400 iterations to converge. Figure~\ref{fig:Al_converge} shows broadly similar behavior for all three methods, though generally the two optimization methods do noticeably improve on the SCDM initial guess as more $\vk$-points are used. We expect that asymptotically the optimization based methods should perform better. However, given the relatively small number of grid points per direction and the complexity of the band structure, it seems we are still in the pre-asymptotic regime.

\begin{figure}[h!]
  \centering
  \includegraphics[width=0.49\textwidth]{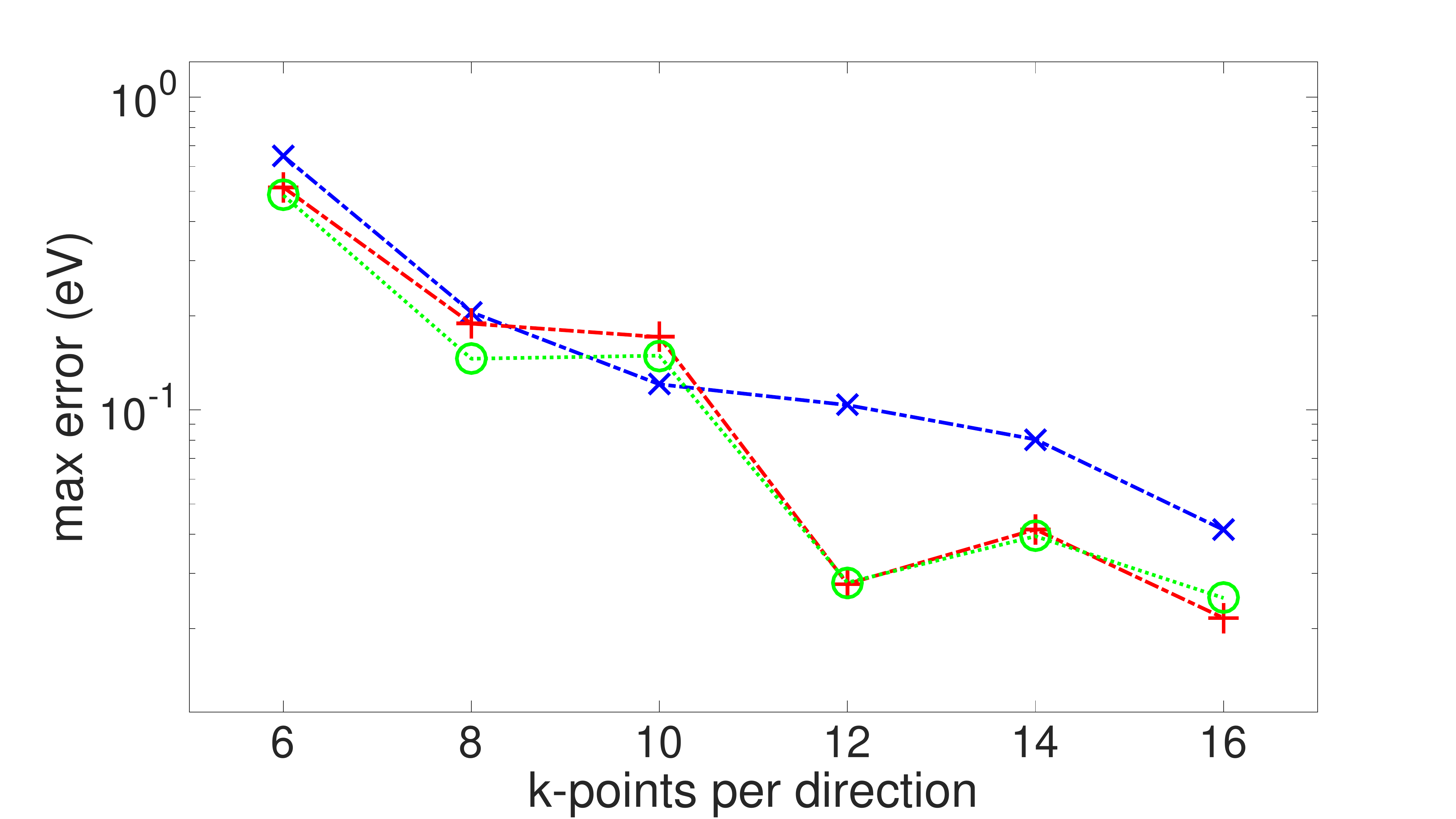}
  \includegraphics[width=0.49\textwidth]{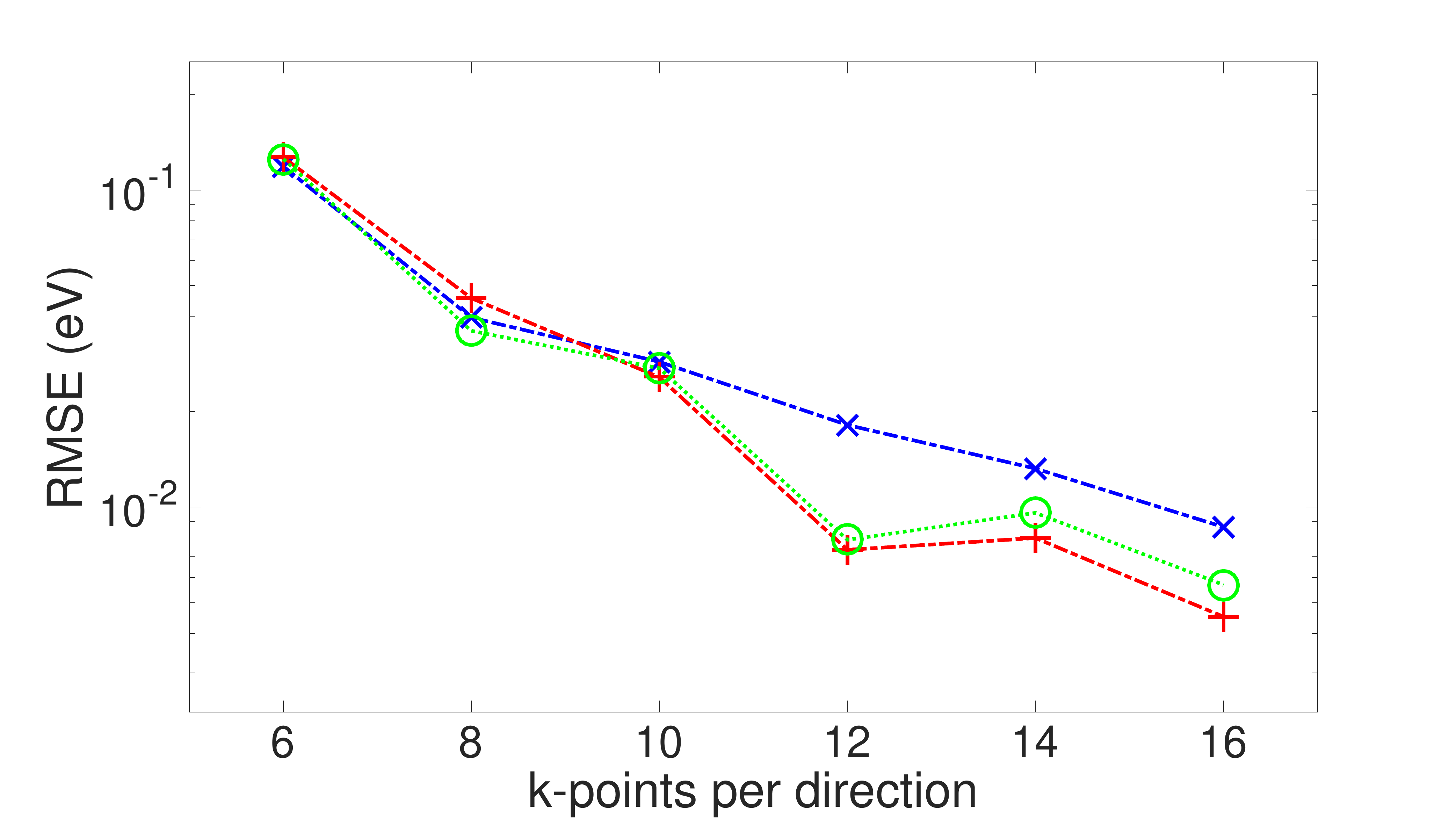}
  \caption{Max and RMSE in band interpolation for aluminum
  as computed using (blue Xs) SCDM, (green circles) our variational
  formulation, and (red +s) \texttt{Wannier90}.}
  \label{fig:Al_converge}
\end{figure}

\subsection{Copper}
Lastly, we illustrate the behavior of our method when used to
interpolate seven conduction bands of copper around the Fermi energy.
Because several bands that we do not wish to track cross through the
energy window, it is not easy to simply look at band interpolation
error. Rather, we place emphasis on the qualitative behavior of the
interpolation and the orbital spread yielded by our variational
formulation. Interestingly, in this case we observed a high
sensitivity of \wannier to the SCDM initial guess based on the
parameters (corresponding to ``entangled case 2'' in
\cite{DamleLin2017}\textemdash a Gaussian) and a contrasting
robustness of our variational method. In all cases we used a frozen
window of 13.5 to 17 eV and no outer window for both \wannier and our variational method.

When fixing the parameter $\mu = 15.5$ and varying $\sigma$ from $3.0$
to $6.0$, both SCDM and our variational method robustly generated good
band interpolation. However, for a range of $\sigma$ \wannier with
disentanglement failed to reach convergence. To sweep over several
values of $\sigma$ we limit \wannier to 5,000 iterations each for the
disentanglement and spread minimization, and we limit our variational
method to 1,000 iterations. 

\begin{myrem}
For $\sigma = 5.0$ where we
observed particularly bad performance of \wannier (see below), we let it run for
100,000 iterations of each the disentanglement and spread. While the
disentanglement procedure converged after roughly 20,000 iterations,
the spread minimization failed to converge even after 100,000
iterations. While this does not guarantee that the local minimum
\wannier may eventually find is poor and could simply be the
optimization algorithm behaving poorly, we feel this is a reasonable
comparison to make even without \wannier determining that it has
converged. Interestingly, passing the output of \wannier in this setting to our variational method we were able to converge to a good gauge.
\end{myrem}

Figures~\ref{fig:Cu_band_interp40} and~\ref{fig:Cu_band_interp50} show
the band interpolation of the three methods in the case where $\sigma =
4.0$ and $\sigma = 5.0$. We also report the individual spreads of the orbitals in Tables~\ref{tab:Cu_spread_all4} and~\ref{tab:Cu_spread_all5}.  We observe that in both cases the SCDM initial
guess and our optimized solution yield good band interpolation within
the frozen window. In
contrast, \wannier does not find a good local optima in the latter case
and this results in poor interpolation quality. We further investigate
this behavior in Table~\ref{tab:Cu_spread}, where we report the total
final spread of the methods as we vary $\sigma$. We see that for two of
the parameter values \wannier failed to find a local optimum that is close
to what our variational method finds.


\begin{figure}[h!]
  \centering
  \includegraphics[width=0.8\textwidth]{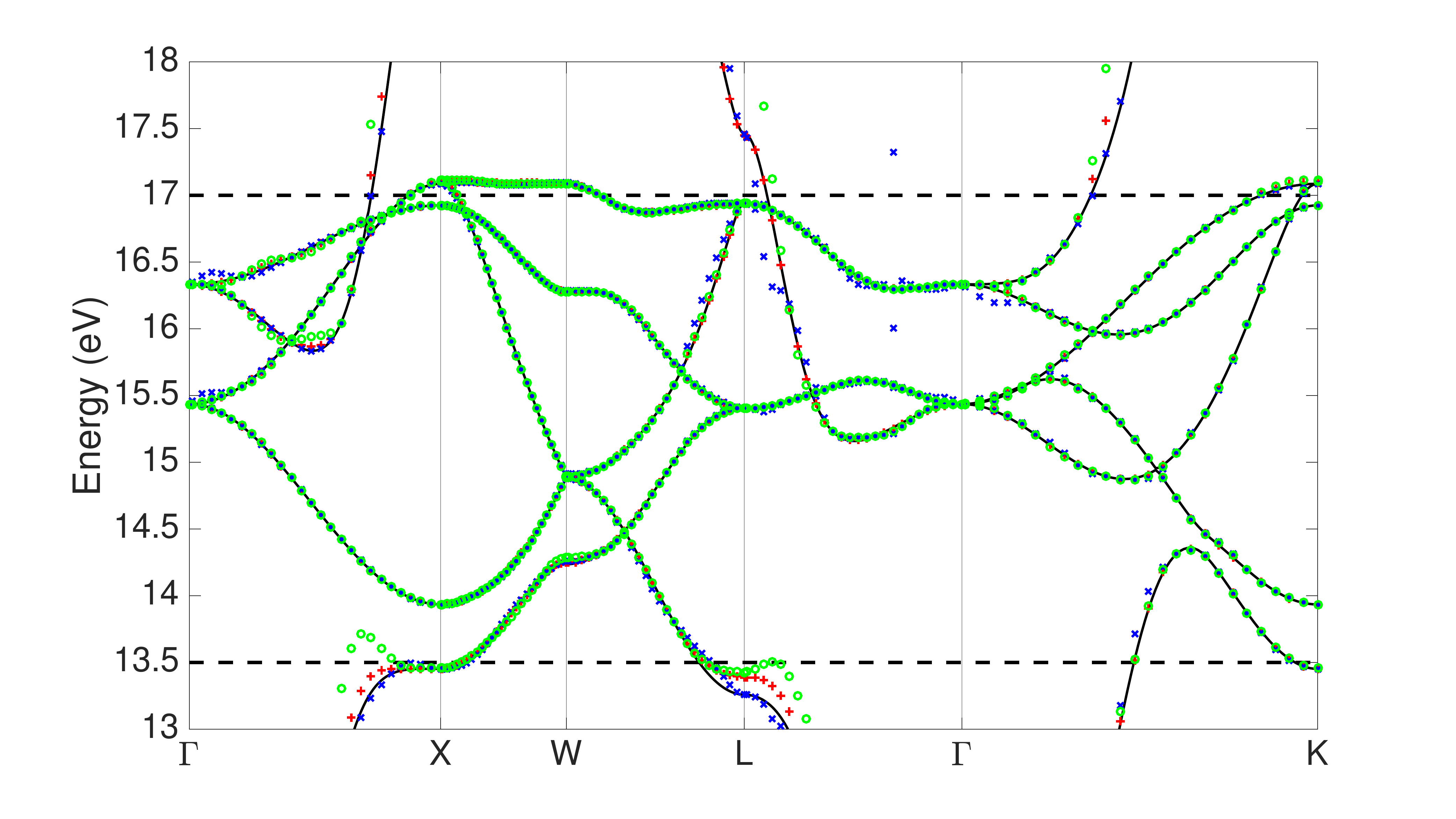}
  \caption{Using an SCDM initial guess with $\mu = 15.5$ and $\sigma =4.0$, Wannier interpolation of copper with 10 k-points using (blue Xs) SCDM, (green circles) our variational formulation, and (red +s) \wannier compared with a (black line) reference calculation. The frozen window is the region between the dotted black line.}
  \label{fig:Cu_band_interp40}
\end{figure}

\begin{table}
\rowcolors{2}{gray!25}{white}
\centering
\begin{tabular}{lccccccc} \rowcolor{gray!50}
 & \multicolumn{7}{c}{Orbital spread $\left(\text{\AA}^2\right)$} \\
 Variational & 0.24 & 0.50 & 0.50 & 0.51 & 0.56 & 0.56 & 1.30 \\ 
 \wannier & 0.41 & 0.43 & 0.43 & 0.48 & 0.54 & 0.56 & 1.41 \\ 
 SCDM & 1.49 & 2.34 & 2.46 & 3.06 & 3.13 & 3.68 & 7.50 \\ 
\end{tabular}
\caption{Spreads of the seven individual Wannier functions for copper with the $\sigma = 4.0$ initial guess.}
\label{tab:Cu_spread_all4}
\end{table}

\begin{table}
\rowcolors{2}{gray!25}{white}
\centering
\begin{tabular}{lccccccc} \rowcolor{gray!50}
 & \multicolumn{7}{c}{Orbital spread $\left(\text{\AA}^2\right)$} \\
 Variational & 0.21 & 0.47 & 0.51 & 0.52 & 0.53 & 0.62 & 1.38 \\ 
 \wannier & 0.47 & 0.53 & 0.54 & 0.60 & 0.71 & 2.78 & 2.83 \\ 
 SCDM & 1.21 & 1.47 & 1.56 & 1.78 & 1.97 & 2.53 & 8.87 \\ 
\end{tabular}
\caption{Spreads of the seven individual Wannier functions for copper with the $\sigma = 5.0$ initial guess.}
\label{tab:Cu_spread_all5}
\end{table}

\begin{figure}[h!]
  \centering
  \includegraphics[width=0.8\textwidth]{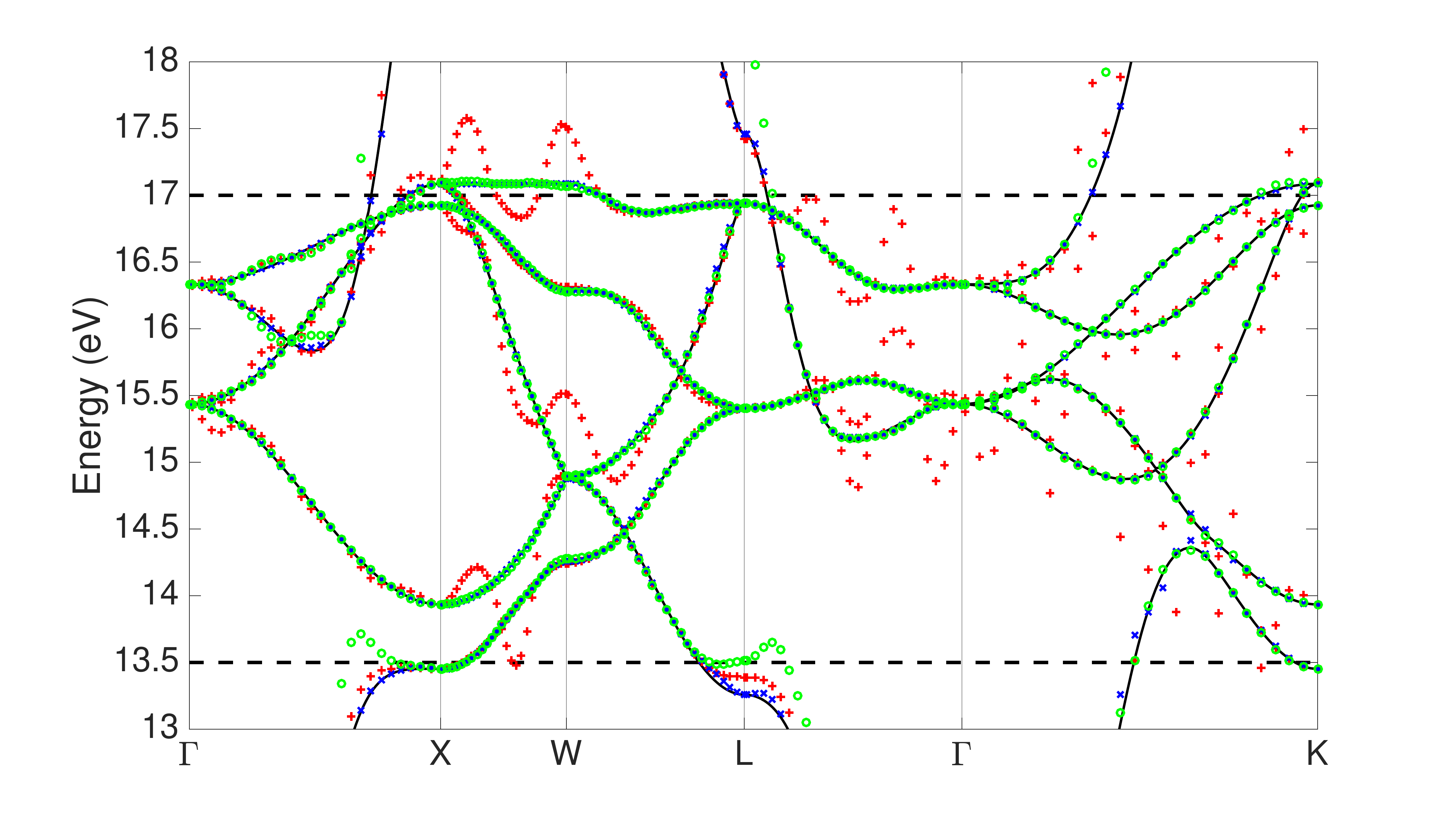}
  \caption{Using an SCDM initial guess with $\mu = 15.5$ and $\sigma =5.0$, Wannier interpolation of Copper with 10 k-points using (blue Xs) SCDM, (green circles) our variational formulation, and (red +s) \wannier compared with a (black line) reference calculation. The frozen window is the region between the dotted black line.}
  \label{fig:Cu_band_interp50}
\end{figure}

\begin{table}
\rowcolors{2}{gray!25}{white}
\centering
\begin{tabular}{lccccccc} \rowcolor{gray!50}
 $\sigma$  & 3.0 & 3.5 & 4.0 & 4.5 & 5.0 & 5.5 & 6.0  \\
 SCDM & 34.22 & 26.9 & 23.65 & 20.83 & 19.39 & 18.7 & 17.3 \\ 
 Our method & 4.23 & 4.17 & 4.18 & 4.23 & 4.23 & 4.18 & 4.18 \\ 
 \wannier & 4.27 & 4.26 & 4.26 & 8.48 & 8.46 & 4.26 & 4.26 \\ 
\end{tabular}
\caption{Comparison of the spreads $\left(\text{\AA}^2\right)$ for copper as the $\sigma$ parameter of the SCDM initial guess is varied.}
\label{tab:Cu_spread}
\end{table}

\section{Free electron gas}\label{sec:electrongas}
We now investigate the decay property of the generalized Wannier
functions in real space. This is hard to
investigate numerically for real materials because of the very large
number of $\vk$ points needed to see the asymptotic decay, as
evidenced in the previous section. For
instance, in \cite{YatesWangVanderbiltEtAl2007} the authors report a
fast (consistent with exponential) decay up to grid sizes of
$15 \times 15 \times 15$, although the convergence appears to slow
down after that. Another problem is that it is hard to find realistic
systems in one or two dimensions on which disentangled Wannier systems
make sense. The free electron gas, \ie~$V = 0$, is explicitly solvable
and poses a very interesting benchmark for disentanglement, even in
one or two dimensions. In this section, we therefore apply the
methodology of Section~\ref{sec:varwannier} to $d$-dimensional free
electron gas. We simply define the lattice as
$\mathbb L = 2\pi \mathbb Z^{d}$, so that
$\mathbb L^{*} = \mathbb Z^{d}$, and we let the Brillouin zone be the
set $[0,1)^{d}$.

In this case the eigenfunctions of the operator $H(\vk) = (-\I \nabla
+ \vk)^{2}$ are given by
\begin{align*}
  v_{\vK}(\vr) &= \frac{1}{\sqrt{|\Gamma|}} e^{\I \vK \cdot \vr}
\end{align*}
for $\vK$ in the reciprocal lattice $\mathbb Z^{d}$. The corresponding
eigenvalues are $\varepsilon_{\vK,\vk} = |\vK + \vk|^{2}$. Notice that
since $\varepsilon_{\vK,\vk}$ is the squared distance from $\vk$ to
$-\vK$, the dispersion relation is the set of squared distances of
$\vk$ to the points of the reciprocal lattice $\mathbb
Z^{d}$.

We order the eigenfunctions by ordering the eigenvalues in a
non-decreasing order. We let
$u_{n,\vk} = e_{\vK_{n}}$, where $\varepsilon_{\vK_{n},\vk}$ is the
$n$-th eigenvalue of $H(\vk)$, this choice being arbitrary in the
presence of degeneracies. The matrix elements
$M_{mn}^{\vk,\vb} = \langle u_{m,\vk}, u_{n,\vk+\vb}\rangle$ of
overlap between neighboring $\vk$ points used in
the optimization process \cite{MarzariVanderbilt1997} then assume a
particularly simple expression: $M_{mn}^{\vk,\vb} = 1$ if $u_{m\vk}$
and $u_{n,\vk+\vb}$ are associated with the same $\vK$, and $M_{mn}^{\vk,\vb} = 0$
otherwise. In particular, this matrix differs from the identity (it is a permutation matrix) near eigenvalue crossings, where
$\varepsilon_{\vK,\vk} = \varepsilon_{\vK',\vk}$ with $\vK' \neq \vK$.

The free electron gas also makes it particularly easy to compute the
Wannier functions through their Fourier transforms. For a Wannier
function given in $\vk$-space by
\begin{align*}
  \widetilde{\psi}_{n\vk}(\vr) = \sum_{n \in \N} e^{\I \vk\cdot \vr}
 v_{\vK_{m}}(\vr)U_{mn}(\vk),
\end{align*}
it holds that
\begin{align*}
  w_{n,\bvec{0}}(\vr) = \int_{[0,1]^{d}} \widetilde{\psi}_{n\vk}(\vr) \ud \vk = \frac{1}{(2\pi)^{d}}\int_{[0,1]^{d}}\sum_{m \in \N} U_{mn}(\vk)e^{\I (\vk+\vK_{m})\cdot \vr},
\end{align*}
from which it follows that $U_{mn}(\vk)$ is simply the Fourier transform
of $w_{n,\bvec{0}}$ at frequency $\bvec{\xi} = \vK_{m}+\vk$.

Note that the free electron gas possesses a large number of symmetries. As a
consequence, it has a number of properties that are not expected from
generic systems. For instance, eigenvalue crossings are numerous (of
co-dimension 1, \ie~points in 1D, lines in 2D and planes in 3D), while
they are expected from the von Neumann-Wigner theorem
\cite{von1993merkwurdige} to be rare. This theorem predicts that, in
the absence of particular symmetries, the crossing of eigenvalues of a
Hermitian matrix are a phenomenon of codimension 3. This is believed
to be true (although unproved in many cases) for ``generic''
Schr\"odinger operators \cite{kuchment2016overview}: for a generic
$V$, the band structure is only expected to show crossings in
codimension $3$ (isolated points in 3D, and no crossings in lower
dimensions). The free electron gas, possessing all the symmetries a
Schr\"odinger operator can have, may thus be considered a worst case
system for disentanglement.

\subsection{The 1D case}

\subsubsection{One frozen band, two Wannier functions}
The dispersion relation of the 1D free electron gas results in a
crossing between the first and second eigenvalues at half-integers
and between the second and third eigenvalues at integers (see Figure
\ref{fig:WF_1D}). Because of the crossing between the first and second
band, any single Wannier function representing the first band has a
maximal decay of $1/r$. We therefore consider the problem of finding
two Wannier functions representing the first band
($N_{w} = 2, N_{f} = 1$). We do not set an outer window but discretize
the band structure using $2L+1$ Fourier modes with $L=10$ (as we will
see, the Wannier functions we find are compactly supported in Fourier
space, and therefore do not depend on the choice of
$L \geq 2$).

For this simple system, initializing the gauge randomly allows our
optimization algorithm to converge robustly to the same Wannier
functions, up to a change of sign and a shift by a lattice vector.
It would therefore seem that the global minimizer of the spread is
unique (up to the invariant degrees of freedom described above) and
real. The SCDM
algorithm with the settings $\mu = 0$ and $\sigma = 2$ yields Wannier
functions that are visually indistinguishable from the optimized
Wannier functions, and have a very similar spread ($2.51$ for the SCDM
algorithm as compared to $2.44$ for the optimized Wannier functions).

\begin{figure}[h]
  \centering
  \begin{subfigure}[t]{.45\textwidth}
    \includegraphics[width=\textwidth]{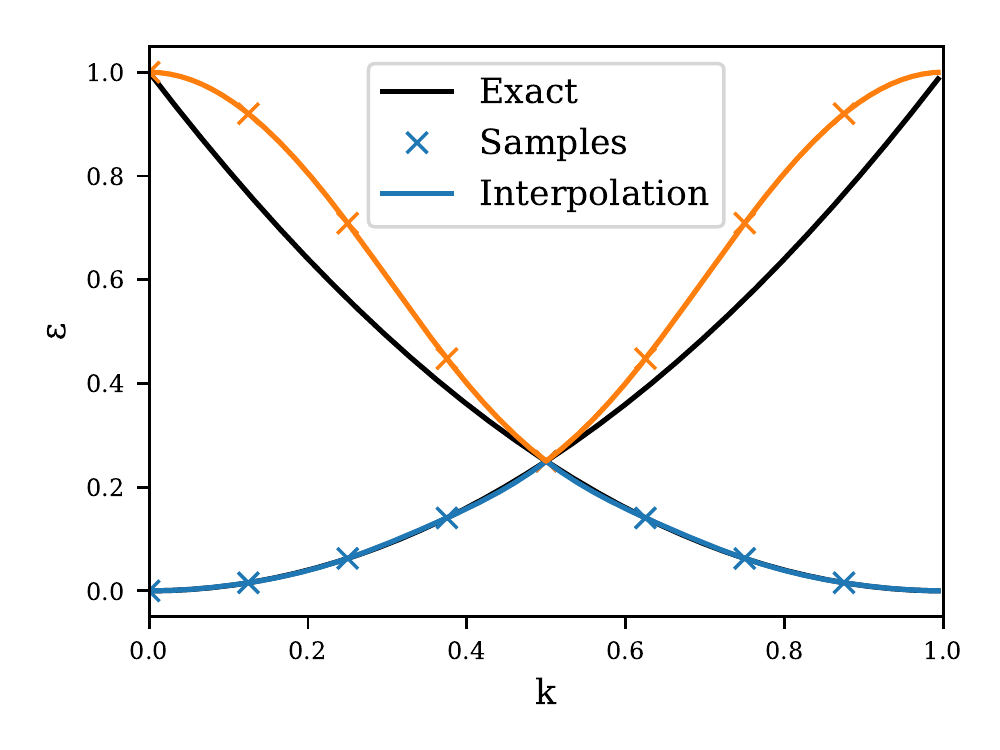}
    \caption{Exact band plot, samples on the $k$-space grid, and
      Wannier interpolation using the optimized Wannier
      functions.}
  \end{subfigure}
  \quad
  \begin{subfigure}[t]{.45\textwidth}
    \includegraphics[width=\textwidth]{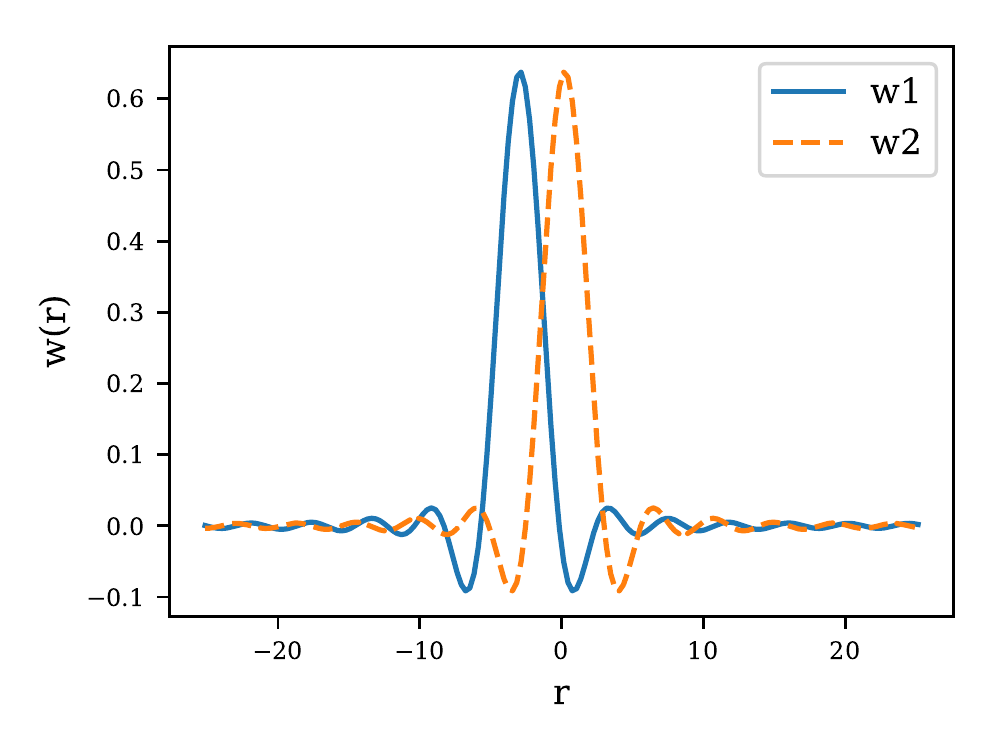}
    \caption{Wannier functions in real space.}
  \end{subfigure}
  \caption{Optimized Wannier functions obtained on a $k$-space grid of size $N=8$.}
  \label{fig:WF_1D}
\end{figure}

In Figure \ref{fig:WF_1D} we observe that the first band (which is
frozen) is exactly reproduced. Since the Wannier functions are
localized, the Wannier interpolation is very good, and in particular
has no trouble reproducing the crossing. Furthermore, the optimized
Wannier functions are symmetric (they are real, and the second one is
a translate of the first by half a lattice vector).

However, closer inspection reveals that the optimized Wannier
functions are not exponentially localized, and in fact decay as
$1/r^{2}$ for large $r$ (Figure \ref{fig:oneoverrtwo}). The origin of
this slow decay of the Wannier functions is the kink that appears at
$\xi = \pm 3/2$ in their Fourier transform as in Figure \ref{fig:kink},
which also shows that the Wannier functions are compactly
supported on $[-3/2,3/2]$ in Fourier space. This is because only the
first three bands are occupied by the Wannier functions, as can be
seen via inspection of the $U$ matrix. Since band $3$ crosses with band
$4$ at $k=1/2$ (corresponding to basis functions $K = 1$ and $K=-2$),
the continuity of $\wt{\psi}_{nk}$ with respect to $k$ means that
$U_{31}(1/2) = U_{32}(1/2) = 0$. But the first derivative is not zero
at $1/2$, which in turns creates the kink in Fourier space at
$K=1,k=1/2$ and $K=-2,k=1/2$, that is, $\xi = \pm 3/2$.

\begin{figure}[h!]
  \centering
  \begin{subfigure}[t]{.45\textwidth}
    \includegraphics[width=\textwidth]{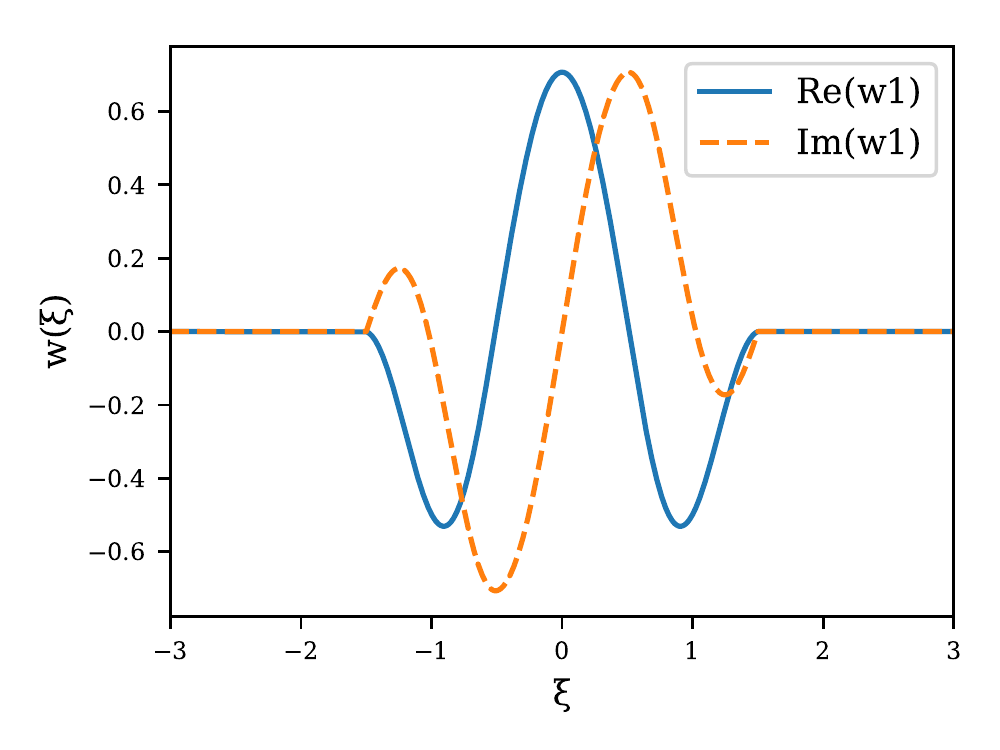}
    \caption{$w_{1}$ in Fourier space: a clear kink is
      visible at $\xi = \pm 3/2$}
  \label{fig:kink}
  \end{subfigure}
  \quad
  \begin{subfigure}[t]{.45\textwidth}
    \includegraphics[width=\textwidth]{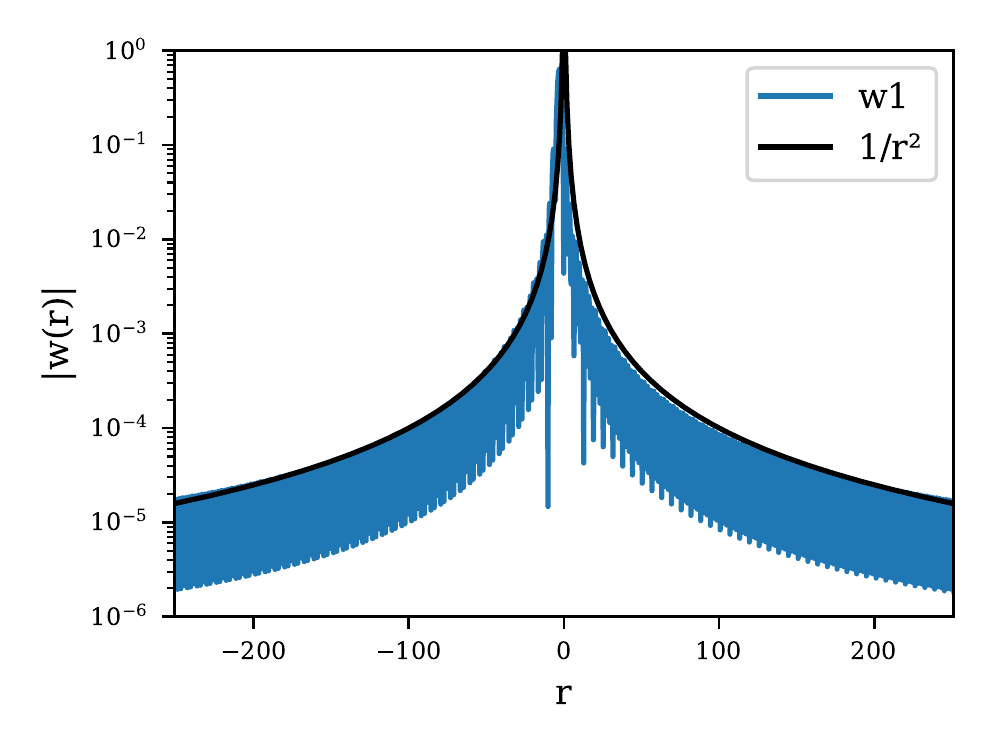}
    \caption{$w_{1}$ in real space, showing excellent agreement with a decay
      rate $r^{-2}$.}
    \label{fig:oneoverrtwo}
  \end{subfigure}
  \caption{Optimized Wannier functions obtained on a $k$-space grid of size $N=80$. The kink
    in Fourier space at $\xi = \pm 3/2$ translates to a $r^{-2}$ decay
    (only $w_{1}$ represented for clarity). The
    maximum reconstruction error on the first band is $3 \times
    10^{-5}$.}
  \label{fig:MLWF_1D}
\end{figure}

The fact that the optimized Wannier functions are only weakly localized is surprising
at first glance, because it was proven in \cite{PanatiPisante2013}
that, for isolated bands, maximally-localized Wannier functions are exponentially localized. What is different in this case? The crucial
point in the analysis of \cite{PanatiPisante2013} is that
the spread is similar to a Dirichlet energy, or a $H^{1}$ norm,
in $k$ space. Then the $\wt{\psi}_{nk}$ are shown to satisfy an elliptic
equation in $k$ space, which by a bootstrap argument implies their
analyticity. Here, this argument breaks down because the constraint
$P_{w}(k)P_{f}(k) = P_{f}(k)$ is discontinuous at crossings. This
creates an effective ``boundary condition'' for the gauge $U(k)$ at
crossings that destroys the regularity. A simple analogy is that
eigenfunctions of the Laplacian on $[0,1]$ (critical points of the
Dirichlet energy) are smooth on $\R$ when periodic boundary conditions
are imposed, but generically produce kinks at $0$ and $1$ when
Dirichlet boundary conditions are imposed. Here also, the effective
boundary condition $U_{31}(1/2)=U_{32}(1/2) = 0$ produces a kink.

To remedy this, we show on this particular example how to build
Wannier functions that are of class $C^{\infty}$ in Fourier space (and therefore
decay faster than any inverse polynomial in real space). 
In order to do so, we define the function $\alpha(x) = e^{-1/x}$ for
$x \geq 0$, $\alpha(x) = 0$ for $x < 0$. This function is
$C^{\infty}$, and identically zero for $x \leq 0$. The function
$f(x) = \frac{\alpha(x)}{\alpha(x)+\alpha(1-x)}$ is therefore
$C^{\infty}$ on $\mathbb R$, equal to $0$ for $x \leq 0$, and equal to
$1$ for $x \geq 1$.

Given $U(0^{+}), U(1/2^{\pm}), U(1^{-})$ obtained from the optimized
Wannier functions, we construct a new gauge
\begin{align*}
  \wt U(k) =
  \begin{cases}
    (1-f(2k)) U(0^{+}) + f(2k) U(1/2^{-}) & \text{ if $0 < k \le 1/2$,}\\
    (1-f(2k-1)) U(1/2^{+}) + f(2k-1) U(1^{-}) & \text{ if $1/2 < k < 1$.}
  \end{cases}
\end{align*}
This produces a new set of $\wt{\psi}_{nk}$ that are smooth with
respect to $k$, but not orthogonal and do not span the frozen bands.
We now impose those conditions in the same way we do for the SCDM
procedure as described in Section~\ref{sec:scdm}. In this specific example, this produces a
smooth gauge, as illustrated in Figure \ref{fig:smoothed_WF_1D}.

\begin{figure}[h!]
  \centering
  \begin{subfigure}[t]{.45\textwidth}
    \includegraphics[width=\textwidth]{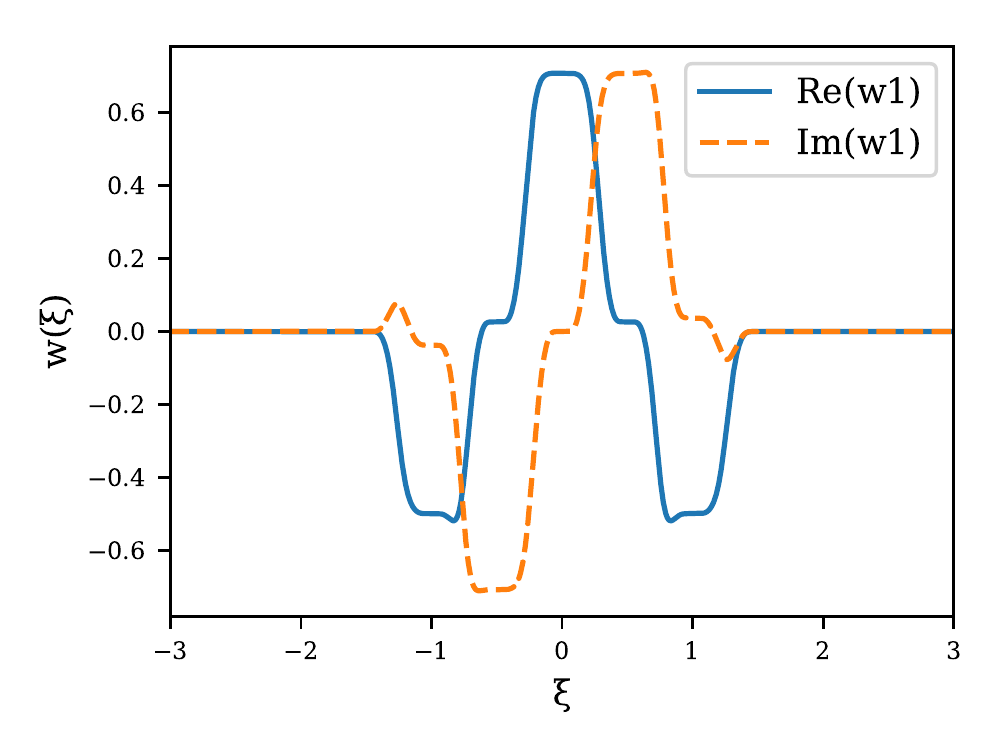}
    \caption{Smoothed Wannier function in Fourier space: although the
      variations are more rugged than Figure \ref{fig:MLWF_1D}, it is
      a $C^{\infty}$ function.}
  \end{subfigure}
  \quad
  \begin{subfigure}[t]{.45\textwidth}
    \includegraphics[width=\textwidth]{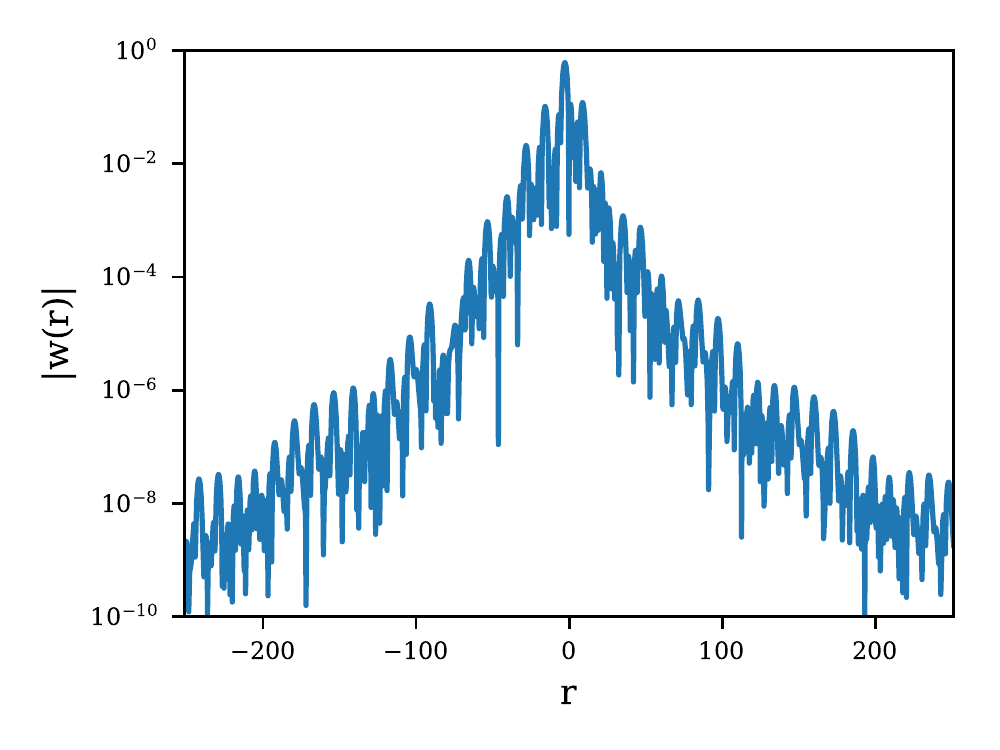}
    \caption{Smoothed Wannier function in real space: the asymptotic
      decay is much faster than in Figure \ref{fig:MLWF_1D}.}
  \end{subfigure}
  \caption{Smoothed Wannier functions obtained on a $k$-space grid of
    size $N=80$ (see main text for details, only the first Wannier
    function represented for clarity). The maximum
    reconstruction error on the first band is $2 \times 10^{-9}$. }
  \label{fig:smoothed_WF_1D}
\end{figure}

The Wannier functions obtained in this way display more rugged
variation in Fourier space (a consequence of the use of the function
$f$ above), and accordingly the decay is slower for small values of
$x$. However, because the gauge is smooth, the Wannier functions decay
much faster for large values. Since the gauge is of class $C^{\infty}$
but not analytic, the Wannier functions decay faster than any
polynomial, but not exponentially. Since the error of Wannier
interpolation on the first band is determined by the interaction of
the Wannier functions on the supercell with their periodic images, the
faster asymptotic decay leads to a better reconstruction of the first
band in this case. Numerical tests indicate that the cross-over point
(above which the reconstruction error on the first band with the
smoothing procedure is smaller than that with the optimized Wannier
functions) is around $N=12$.

\subsubsection{General case in 1D}
In the one-dimensional free electron gas crossings only happen at
half-integers, and between two bands at the same time. Accordingly,
whenever $N_{w} \geq N_{f}+1$ we can construct optimized Wannier functions similar to the
ones above\textemdash they are translates of each other, compactly
supported in Fourier space, and decay as $1/r^{2}$. When
$N_{w} = N_{f}$ (that is, we are treating a metal as if it was an
insulator), the gauge is discontinuous, and the corresponding Wannier
functions decay as $1/r$.

\subsection{The 2D case}
In 2D, the first four bands of the free electron gas are degenerate at $\vk = (1/2,1/2)$,
corresponding to the wave vectors 
\[
\vK = (0,0), (-1,0),(0,-1),(-1,-1).
\]
This means that $P_{w}(1/2,1/2)$ must span the four-dimensional
subspace corresponding to those four wave vectors. Therefore, freezing
the first band can only produce localized Wannier functions when
$N_{w} \geq 4$. Accordingly, we consider the case
$N_{f} = 1$ and $N_{w} = 4.$

Similarly to the 1D case, the optimized Wannier functions are real,
and differ from each other only by a change of origin. However, they
are not compactly supported in Fourier space, and instead have
decaying components on arbitrarily large wave vectors $\vK$ (see
Figure \ref{fig:2D_wan_abs}). Their support in Fourier space displays
a checkerboard pattern, and in particular has corners. Furthermore, we
observe numerically that the gradient in Fourier space blows up
near those corners when the size of the $k$-point grid $N$ is
increased (see Figure \ref{fig:2D_wan_grad}). The maximum value of the
gradient is found to behave as $N^{1/3}$. This is consistent with the
singularity of the first eigenfunction of the Laplace operator with
Dirichlet boundary conditions on a domain with corners, which behave
as $r^{2/3} \sin(2\theta/3)$ near a corner described by
$\theta \in [0,3\pi/2]$ in polar coordinates \cite{dauge2006elliptic}.
In particular, their gradient behaves as $r^{-1/3}$ near the corner,
explaining the divergence as $N^{1/3}$ when discretized on a grid. As
in the 1D case, imposing frozen bands acts as an effective boundary
condition for the gauge and destroys the regularity.

The decay of the optimized Wannier functions in Fourier space is
dictated by the singularity in Fourier space. More precisely, the
first derivative is discontinuous along edges, which corresponds to a
decay of the Wannier functions in the $x$ and $y$ directions as
$1/r^{2}$.

\begin{figure}[h!]
  \centering
  \begin{subfigure}[t]{.45\textwidth}
    \includegraphics[width=\textwidth]{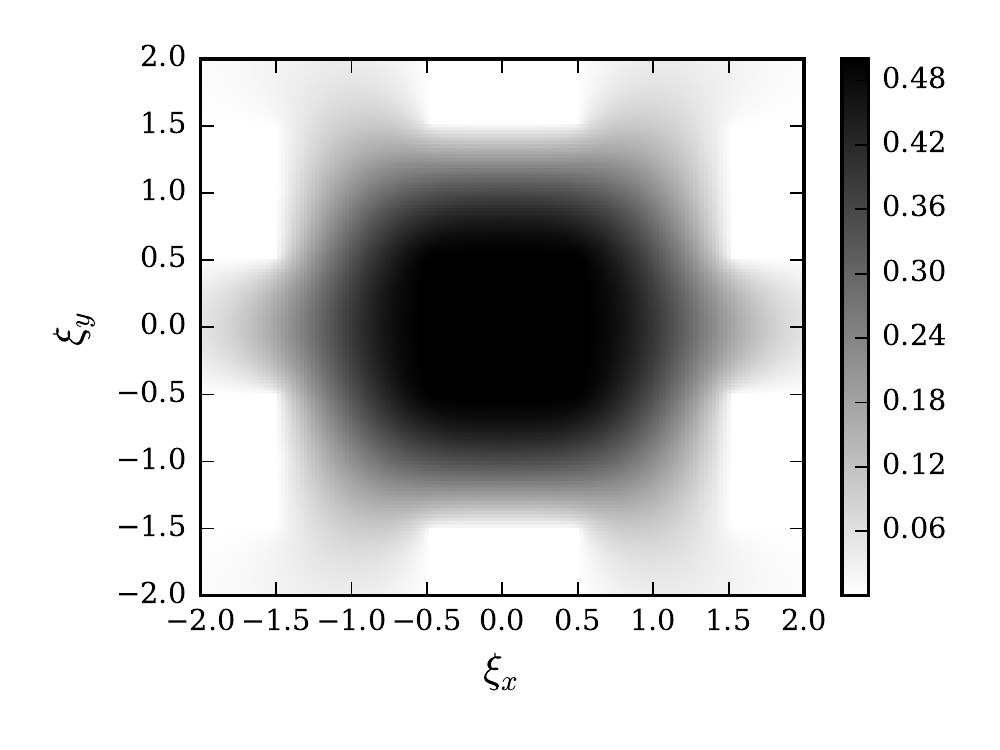}
    \caption{$|\widehat w_{2}(\xi)|$. The function has components on
      arbitrarily large wave vectors.}
    \label{fig:2D_wan_abs}
  \end{subfigure}
  \quad
  \begin{subfigure}[t]{.45\textwidth}
    \includegraphics[width=\textwidth]{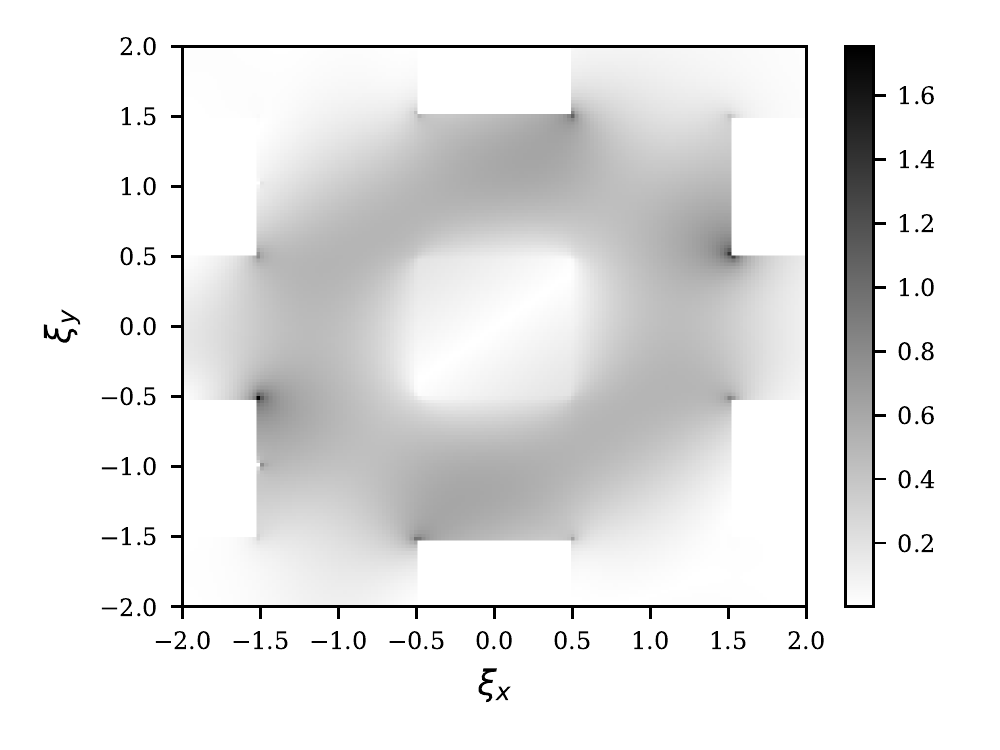}
    \caption{$| \nabla_{\xi}\widehat w_{2}(\xi)|$, clearly showing the
    divergence on corners and discontinuity on edges.}
    \label{fig:2D_wan_grad}
  \end{subfigure}

    \begin{subfigure}[t]{.45\textwidth}
      \includegraphics[width=\textwidth]{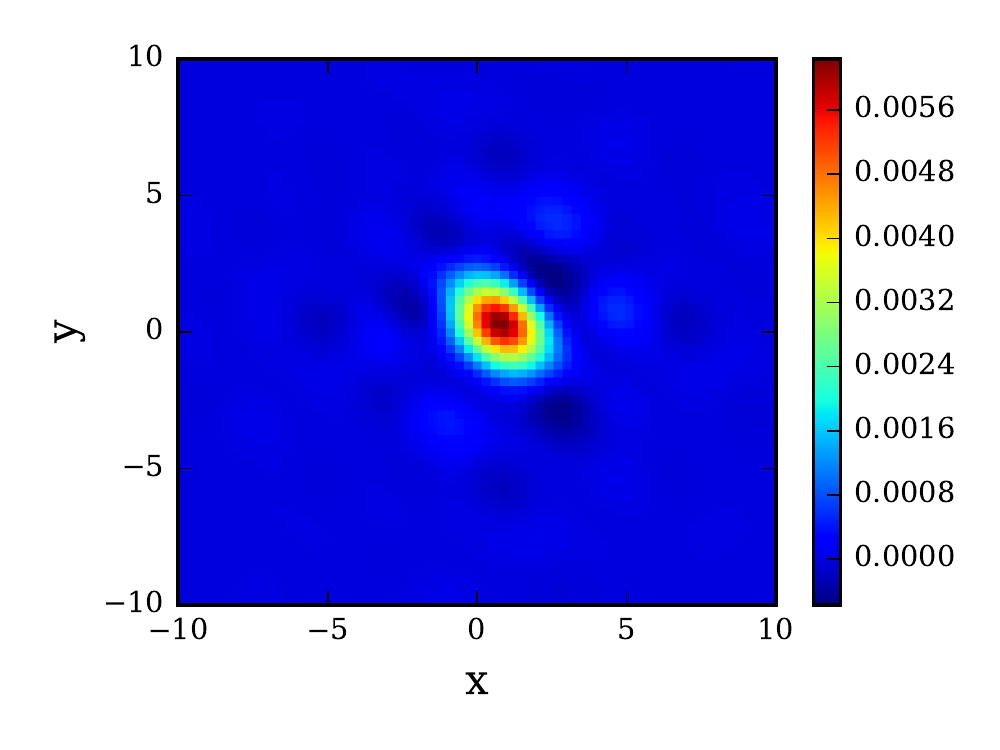}
    \caption{$w_{2}(\vr)$.}
  \end{subfigure}
  \quad
    \begin{subfigure}[t]{.45\textwidth}
      \includegraphics[width=\textwidth]{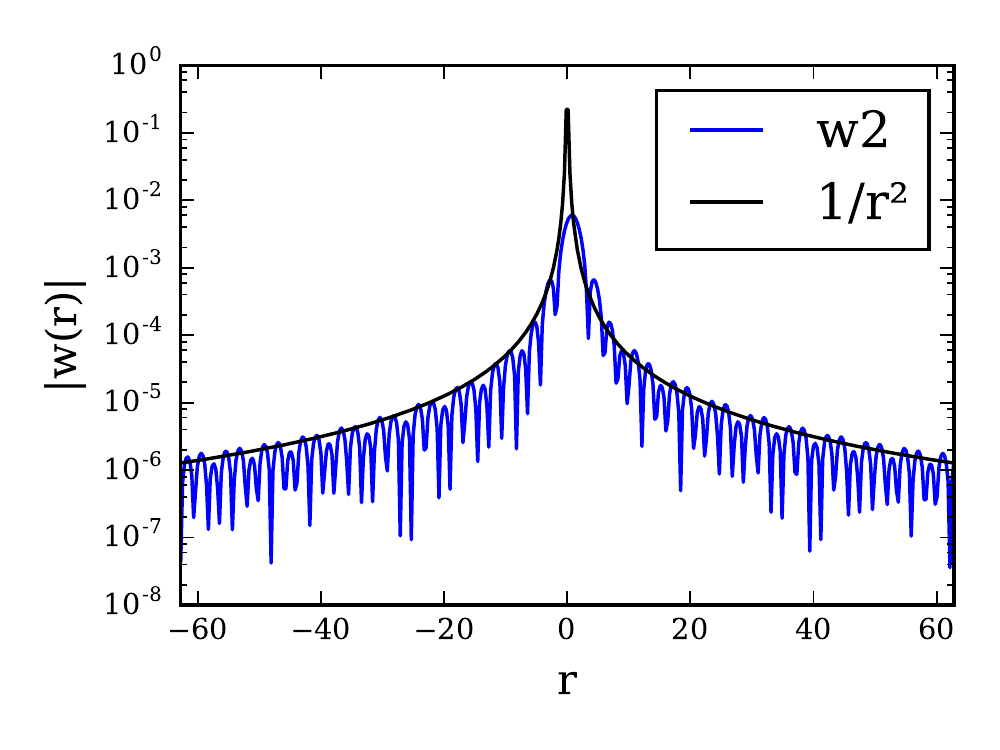}
    \caption{Slice of $w_{2}$ at $y=0$, showing the $1/r^{2}$ decay.}
  \end{subfigure}
  
  \caption{Second of four optimized Wannier functions for the free
    electron gas in 2D with one frozen band, on a
    $40\times40$ $k$-point grid. The other three Wannier functions are
    similar to $w_{2}$.}
  \label{fig:2D_wan}
\end{figure}

\subsection{Discussion}
These results show that the maximally-localized Wannier functions only
decay algebraically in general. Although numerical results are harder
to obtain for real materials, we expect our analysis to carry through:
an eigenvalue crossing at a particular $\vk$ point acts as a constraint
on the gauge, which must at that point be able to span the crossing
eigenspace for the gauge to be continuous. Minimizing the second
moments of the Wannier functions yields a gauge with a
square-integrable but discontinuous first derivative at the crossing
points, resulting in a weak localization.

Our findings are to be contrasted with the recent theoretical result
of \cite{CGLM}, which proves under generic hypotheses that there
exists almost-exponentially localized Wannier functions. This simply
means that, unlike in the case of insulators, for entangled band
structures minimizing the second moments is not an asymptotically
optimal strategy.

To get a faster asymptotic decay, one could minimize higher moments.
This corresponds to minimizing integrals of higher derivatives, which
have to be approximated by more complex stencils, and require the
computation of additional overlaps between the $u_{n\vk}$ than simply
the nearest neighbors. This becomes numerically expensive and complex
to implement. Another possibility is to perform local ``smoothing
surgeries'' similar to the one we demonstrated in one dimension. This
is likely to be useful only for very fine $\vk$-point grids however.

\section{Conclusion and discussion}\label{sec:conclusion}

We have developed a variational formulation that, paired with a
specific initialization strategy, is able to robustly construct
Wannier functions for systems with entangled band structure.
Importantly, the definition of Wannier functions must be generalized,
allowing them to lie in a subspace that contains, but is larger than,
the spectral subspace of interest. While this condition adds extra
constraints to our optimization problem, and they can be phrased in
many theoretically equivalent ways, we find one that is particularly
convenient. This results in a formulation that matches that of partly
occupied Wannier functions \cite{ThygesenHansenJacobsen2005}, and
allows us to view the widely used disentanglement procedure as an
alternating minimization algorithm\textemdash albeit one that only
takes a single alternation step. As the underlying problem is
non-convex, our choice of initialization strategy via the SCDM
methodology is key. As demonstrated with several real materials, our
method is robust and effective at finding localized functions and
enabling good quality band interpolation. Our variational formulation
is versatile, and can be modified relatively easily to accommodate
additional constraints, such as symmetry constraints, for certain type
of real materials. It would also be interesting to study the behavior
of localization properties of generalized Wannier functions for
systems with non-trivial topological characters.

We also study the free electron gas, providing interesting insights
into the further theoretical study of the localization properties of
generalized Wannier functions. We find that the minimization of the
second moments of the Wannier functions only imposes a relatively weak
algebraic decay. Our analysis suggests that, for real 3D materials,
the disentangled Wannier functions decay asymptotically slowly as
well. Further localization is possible, but the method we present here
is likely to only be useful for very fine $\vk$-point grids. The
computation of Wannier functions that are localized in both the
pre-asymptotic and asymptotic regime remains an interesting open question.

\section*{Acknowledgments}

This work was partially supported by the National Science Foundation
under Grant No. DMS-1652330, the Department of Energy under Grants No.
DE-SC0017867 and No. DE-AC02-05CH11231, and the SciDAC project (L. L.);
and the National Science Foundation under grant number DMS-1606277 (A.
D.).

\appendix

\section{Gradient of the objective function}\label{app:gradient}


If $F$ is a function from a complex Hilbert space to $\R$, we recall that its
gradient (sometimes written $\partial F / \partial z^{*}$) is defined as the (unique) vector $g$ such that
\begin{align*}
  F(z+h) = F(z) + \Re \lela g, h\rira + O(h^{2}).
\end{align*}
All the derivatives and gradient below are in this sense. Note that
this is different from the notion of derivative of a $\C \to \C$
function, which is not relevant here (non-trivial complex-to-real
functions are not complex differentiable). The advantage of this
definition is that it allows a straightforward translation of
first-order (but not second-order) optimization algorithms in complex
variables.


We use the numerical setup of \cite{MarzariVanderbilt1997}. Recall
that the $\vk$-point grid is discretized, with a total of $N_{\vk}$
points, and on each $\vk$ in that grid, we are given a set of
$N_{g} \times N_{o}$ matrices $u_{\vk}$ representing the $N_{o}$
orbitals in the outer window discretized on a space of dimension
$N_{g}$. The vectors $\{\vb\}$ are displacements from one $\vk$-point to a
set of neighbors, and $w_{\vb}$ are weights chosen to satisfy
\begin{align*}
  \sum_{\vb} w_{\vb}\vb\vb^{T} = I_{3},
\end{align*}
so that the gradient of a function $f(\vk)$ can be approximated
by
\begin{align*}
  \nabla f(\vk) \approx \sum_{\vb} w_{\vb} (f(\vk+\vb)-f(\vk))\vb
\end{align*}

We look for a set of $N_{o} \times N_{w}$ matrices $\{U_{\vk}\}$ with
orthogonal columns, which define Wannier functions by \eqref{eqn:rotateentangle}. Let
\begin{align*}
  M_{\vk,\vb,m,n} = \langle u_{m,\vk}, u_{n,\vk+\vb}\rangle
\end{align*}
be the $N_{o} \times N_{o}$ overlap matrix between the bands, which is
an input of the algorithm. Then the $N_{w} \times N_{w}$ overlap matrix between the Wannier
functions defined by $U(\vk)$ is
\begin{align*}
  N_{\vk,\vb} = U_{\vk}^{*}M_{\vk,\vb}U_{\vk+\vb}.
\end{align*}
The Marzari-Vanderbilt spread functional is given by
\begin{align*}
  \Omega &= \sum_{n} \lela |\vr|^{2}\rira_{n} - |\lela \vr\rira_{n}|^{2}, \text{ where}\\
  \lela \vr^{2} \rira_{n} &= \frac 1 {N_{\vk}}\sum_{\vk,\vb} w_{\vb}\left( 1 - |N_{\vk\vb nn}|^{2} + (\Im \ln N_{\vk \vb nn})^{2}\right)\\
  \lela \vr \rira_{n} &= - \frac 1 {N_{\vk}}\sum_{\vk,\vb} w_{\vb}\Im \ln N_{\vk \vb nn}\vb
\end{align*}
(equations (11), (31) and (32) of
\cite{MarzariVanderbilt1997}).

We need to compute $\Omega(U + \Delta U)$ to first
order in $\Delta U$, from which we will identify $\nabla \Omega$ by
$\Omega(U+\Delta U) - \Omega(U) = \Re \sum_{\vk}\Tr((\nabla
\Omega)_{\vk}^{*} (\Delta U)_{\vk}) + O(\Delta U^{2})$. 

We begin with $\sum_{n}\langle |\vr|^{2} \rangle_{n}$ and consider the
following quantity:
\begin{align*}
  I = \sum_{\vk\vb n} f(N_{\vk\vb nn})
\end{align*}
where $f: \C \to \R$. Then, using the fact that the set of vectors
$\vb$ is symmetric (contains $\vb$ as well as $-\vb$), and that
$w_{-\vb}=w_{\vb}$,
\begin{align*}
  \Delta I &= \Re \sum_{\vk\vb n} f''(N_{\vk\vb nn})^{*} \Delta N_{\vk\vb nn}\\
  &= \Re \sum_{\vk\vb n} w_{\vb}f'(N_{\vk\vb nn})^{*} (\Delta A_{\vk}^{*} M_{\vk\vb } A_{\vk+\vb})_{nn} + f'(N_{\vk\vb nn})(\Delta A_{\vk+\vb}^{*}M_{\vk\vb }^{*}A_{\vk})_{nn}\\
  &= \Re \sum_{\vk\vb n} w_{\vb}f'(N_{\vk\vb nn})^{*} (\Delta A_{\vk}^{*} M_{\vk\vb } A_{\vk+\vb})_{nn} + f'(N_{\vk-\vb,\vb,nn})(\Delta A_{\vk}^{*}M_{\vk-\vb,\vb}^{*}A_{\vk-\vb})_{nn}\\
  &= \Re \sum_{\vk\vb n} w_{\vb}f'(N_{\vk\vb nn})^{*} (\Delta A_{\vk}^{*} M_{\vk\vb } A_{\vk+\vb})_{nn} + f'(N_{\vk,-\vb,nn}^{*})(\Delta A_{\vk}^{*}M_{\vk,-\vb}^{*}A_{\vk-\vb})_{nn}\\
  &= \Re \sum_{\vk\vb n} w_{\vb}(f'(N_{\vk\vb nn})^{*} + f'(N_{\vk\vb nn}^{*})) (\Delta A_{\vk}^{*} M_{\vk\vb } A_{\vk+\vb})_{nn}
\end{align*}
and the gradient is therefore
\begin{align*}
  (\nabla I)_{\vk mn} &= \sum_{\vb} w_{\vb}(f'(N_{\vk\vb nn})^{*} + f'(N_{\vk\vb nn}^{*})) (M_{\vk\vb } A_{\vk+\vb})_{mn}
\end{align*}
It can be checked using similar arguments that, when
\begin{align*}
  I =\sum_{\vk\vb n} \vb
g(N_{\vk\vb nn}),
\end{align*}
then
\begin{align*}
  (\nabla I)_{\vk mn} &= -\sum_{\vb} w_{\vb}(g'(N_{\vk\vb nn})^{*} + g'(N_{\vk\vb nn}^{*})) (M_{\vk\vb } A_{\vk+\vb})_{mn}\vb
\end{align*}
Applying these formulas with
\begin{align*}
  f(z) &= 1-|z|^{2} + (\Im \ln z)^{2},\quad f'(z) = -2z + 2 \frac {i\Im \ln z} {z^{*}}\\
  g(z) &= -\Im \ln z, \quad g'(z) = -\frac{i}{z^{*}}
\end{align*}
we get
\begin{align*}
  (\nabla \Omega)_{\vk mn} &= \frac{4}{N_{\vk}}\sum_{\vb}
  \left(-N_{\vk\vb nn}^{*} - i\frac{\Im \ln N_{\vk\vb nn} + \lela r
  \rira_{n} \cdot \vb}{N_{\vk\vb nn}}\right)(M_{\vk\vb } A_{\vk+\vb})_{mn}
\end{align*}

Note that this is the \textit{unconstrained} gradient of $\Omega$ with
respect to $U$. Using the chain rule one with $U(\vk)=\begin{bmatrix}
  I_{N_{f}} & 0 \\
    0 & Y
  \end{bmatrix} X$ one can easily
derive the gradient of $\Omega$ with respect to $(X,Y)$. Then we
simply have to minimize with respect to $(X,Y)$ subject to the
orthogonality constraints for $X$ and $Y$ using standard methods \cite{edelman1998geometry,absil2009optimization}.



\bibliographystyle{siam}
\bibliography{wannier}

\end{document}